\newcommand{\argmax}{\mathop{\rm arg~max}\limits}
\newcommand{\atime}{\mathrm{atime}}
\newcommand{\pos}{\mathrm{pos}}
\newcommand{\optcost}{\mathrm{opt\text{-}cost}}
\newcommand{\sur}{\mathrm{sur}}
\newcommand{\free}{\mathrm{free}}
\newcommand{\dD}{d_{\mathrm{D}}}
\newcounter{mpproblem}[section]
\newenvironment{mpproblem}[1]%
{%
    \protected@edef\@currentlabelname{#1}%
    \par\vspace{\baselineskip}\noindent%
    \ifx#1\empty %
    \else \refstepcounter{mpproblem}$($#1$)$ %
    \fi%
    \hfill%
    $\left|%
    \hfill%
    \hspace{0.00\textwidth}%
    \@fleqntrue\@mathmargin\parindent%
    \begin{minipage}{0.86\textwidth}%
    \vspace{-1.0\baselineskip}%
}%
{%
    \end{minipage}%
    \@fleqnfalse%
    \right.$%
    \par\vspace{\baselineskip}\noindent%
    \ignorespacesafterend%
}%
\begin{document}
\title{Deterministic Primal-Dual Algorithms for Online $k$-way Matching with Delays\thanks{This work was partly supported by JSPS KAKENHI Grant Numbers 20H05795, 22H05001, and 21H03397.}}
\titlerunning{A Primal-Dual Algorithm for $k$-Way Matching with Delays}
% If the paper title is too long for the running head, you can set
% an abbreviated paper title here
%
\author{Naonori Kakimura\inst{1}\orcidID{0000-0002-3918-3479} \and
Tomohiro Nakayoshi\inst{2}}
\authorrunning{N. Kakimura and T. Nakayoshi}
% First names are abbreviated in the running head.
% If there are more than two authors, 'et al.' is used.
%
\institute{Keio University, 3-14-1 Hiyoshi, Kohoku-ku, Yokohama, Kanagawa 223-8522, Japan
\email{kakimura@math.keio.ac.jp}\\ \and
The University of Tokyo, 7-3-1 Hongo, Bunkyo-ku, Tokyo 113-8654, Japan
\email{nakayoshi-tomohiro@g.ecc.u-tokyo.ac.jp}}

\maketitle              % typeset the header of the contribution
\begin{abstract}
In this paper, we study the Min-cost Perfect $k$-way Matching with Delays~($k$-MPMD), recently introduced by Melnyk et al.
In the problem, $m$ requests arrive one-by-one over time in a metric space.
At any time, we can irrevocably make a group of $k$ requests who arrived so far, that incurs the distance cost among the $k$ requests in addition to the sum of the waiting cost for the $k$ requests.
The goal is to partition all the requests into groups of $k$ requests, minimizing the total cost.
The problem is a generalization of the min-cost perfect matching with delays~(corresponding to $2$-MPMD).
It is known that no online algorithm for $k$-MPMD can achieve a bounded competitive ratio in general, where the competitive ratio is the worst-case ratio between its performance and the offline optimal value.
On the other hand, $k$-MPMD is known to admit a randomized online algorithm with competitive ratio  $O(k^{5}\log n)$ for a certain class of $k$-point metrics called the $H$-metric, where $n$ is the size of the metric space.
%It is known that the competitive ratio can be unbounded for a general metric, and that $k$-MPMD admits a randomized online algorithm with competitive ratio  $O(k^{5}\log n)$ if a given metric is an $H$-metric, where $n$ is the size of the metric space.
In this paper, we propose a deterministic online algorithm with a competitive ratio of $O(mk^2)$ for the $k$-MPMD in $H$-metric space.
Furthermore, we show that the competitive ratio can be improved to $O(m + k^2)$ if the metric is given as a diameter on a line.
%
%Recently, playing online games with other people has become common.
%To improve player satisfaction, it is important to match with players of similar skill levels.
%There have been many studies carried out on Min-cost Perfect Matching with Delays (MPMD) for one-on-one games such as chess.
%However, in most online games, multiple players participate in a single match.
%Melnyk et al. generalized the number of players to "$k$" and 
%proposed \textit{k}-Way Min-cost Perfect Matching with Delays (\textit{k}-MPMD) in their study~\cite{melnyk_online_2021}.
%They introduced the $H$-metric, a class of metric space, and showed that a randomized algorithm can achieve 
%a competitive ratio of $O(\log n)$ in an $H$-metric space consisting of $n$ points, with $k$ assumed to be constant.
%In this paper, we present a deterministic algorithm with a competitive ratio of $O(mk^2)$ in an $H$-metric space, where $m$ is the number of requests.
%In the special case where the metric space is a diameter on a line, the algorithm has a competitive ratio of $O(m + k^2)$.

\keywords{Online Matching \and Online Algorithm \and Competitive Analysis}
\end{abstract}

\section{Introduction}

Consider an online gaming platform supporting two-player games such as Chess.
In such a platform, players arrive one-by-one over time, and stay in a queue to participate in a match.
The platform then tries to suggest a suitable opponent for each player from the queue.
In order to satisfy the players, the platform aims to maximize the quality of the matched games.
Specifically, we aim to minimize the distance of the matched players~(e.g., the difference of their ratings) as well as the sum of the players' waiting time.

The above situation can be modeled as the problem called Online Matching with Delays, introduced by Emek et al.~\cite{emek_online_2016}.
%In the problem, requests arrive one-by-one over time in a metric space.
%At any time, we can irrevocably find a pair of requests who arrived so far.
%The cost of choosing the pair is defined to be the sum of the distance between the pair and the time that each request of the pair wait.
%The goal is to find a perfect matching of the requests, minimizing the total cost.
In the setting, arriving requests~(or players) are embedded in a metric space so that the distance of each pair is determined.
For the Online Matching with Delays, Emek et al.~\cite{emek_online_2016} proposed a randomized algorithm with a competitive ratio of $O(\log^2 n + \log \Delta)$, where $n$ is the number of points in a metric space, and $\Delta$ is the ratio of the maximum to minimum distance between two points.
%Later, Azar et al.~\cite{azar_polylogarithmic_2017} improved on the previous algorithm with a more efficient randomized approach that has a competitive ratio of $O(\log n)$.
The competitive ratio was later improved to $O(\log n)$ by Azar et al.~\cite{azar_polylogarithmic_2017}.
We remark that both algorithms require that a metric space is finite and all the points in the metric space are known in advance~(we note that arriving requests may be embedded into the same point more than once).
%Both algorithms probabilistically embed the $n$-point metric space in Hierarchically well-Separated Trees (HSTs)~\cite{bartal_probabilistic_1996,fakcharoenphol_tight_2004} with a distortion of $O(\log n)$.
%However, this requires prior knowledge of the metric space.
%In contrast, there are also deterministic algorithms that do not require prior knowledge of the points.
Bienkowski et al.~\cite{bienkowski_primal-dual_2018} presented a primal-dual algorithm with a competitive ratio of $O(m)$, where $m$ is the number of requests.
Another algorithm with a better competitive ratio of $O(m^{0.59})$ was proposed by Azar et al.~\cite{azar_deterministic_2020}.
%Azar et al.~\cite{azar_deterministic_2020} proposed an algorithm that spreads hemispheres from each requests with a competitive ratio of $O(m^{0.59})$.

In this paper, we consider a generalization of Online Matching with Delays, called the Min-cost Perfect $k$-way Matching with Delays~($k$-MPMD)~\cite{melnyk_online_2021}.
In the problem, requests arrive one-by-one over time.
At any time, instead of choosing a pair of requests, we make a group of $k$ requests.
This corresponds to an online gaming platform that allows more than two players to participate, such as mahjong~($k = 4$), Splatoon~($k = 8$), Apex Legends~($k = 60$), and Fortnite~($k = 100$).
Then we aim to partition all the requests into groups of size-$k$ subsets, minimizing the sum of the distance of the requests in the same group and the total waiting time.

To generalize to $k$-MPMD, 
it is necessary to measure the distance of a group of $k>2$ requests.
That is, we need to introduce a metric space that defines distances for any subset of $k$ points.
Although there are many ways of generalizing a standard distance between two points to $k>2$ points in the literature~\cite{assafPartialNmetricSpaces2015,khanPossibitityNtopologicalSpaces2012}, 
Melnyk et al.~\cite{melnyk_online_2021} showed that most known generalized metrics on $k$ points cannot achieve a bounded competitive ratio for the $k$-MPMD.
Melnyk et al.~\cite{melnyk_online_2021} then introduced a new interesting class of generalized metric, called \textit{$H$-metric}, and proposed 
a randomized algorithm for the $k$-MPMD on $H$-metric with a competitive ratio of $O(k^{5}\log n)$, extending Azar et al.~\cite{azar_polylogarithmic_2017}.
%\textbf{最後にチェックする}

The main contribution of this paper is to propose a deterministic algorithm for the $k$-MPMD on $H$-metric with a competitive ratio of $O(mk^2)$, where $m$ is the number of requests.
The proposed algorithm adopts a primal-dual algorithm based on a linear programming relaxation of the $k$-MPMD.
%This algorithm is an extension of the primal-dual algorithm by Bienkowski et al.~\cite{bienkowski_primal-dual_2018} for Online Matching with Delays.

To design a primal-dual algorithm, we first formulate a linear programming relaxation of the offline problem, that is, when a sequence of requests is given in advance.
We remark that even the offline setting is NP-hard when $k\geq 3$, as it includes the triangle packing problem.
%A standard LP formulation for the $k$-MPMD can be done by introducing a variable for each subset of size $k$.
We first show that $H$-metric can be approximated by a standard metric~(Theorem~\ref{thm:approx_hmetric}).
This allows us to construct a linear programming problem with variables for each pair of requests such that the optimal value gives a lower bound on the offline version of the $k$-MPMD.
%the standard LP formulation can be relaxed to the one that has a variable for each pair of requests, and that the obtained linear programming gives a lower bound on the optimal value of the problem.
Using the linear programming problem, we can design a primal-dual algorithm by extending the one by Bienkowski et al.~\cite{bienkowski_primal-dual_2018} for Online Matching with Delays.
We show that, by the observation on $H$-metric~(Theorem~\ref{thm:approx_hmetric}) again, the cost of the output can be upper-bounded by the dual objective value of our linear programming problem.

An interesting special case of the $H$-metric is the diameter on a line.
That is, points are given on a 1-dimensional line, and the distance of $k$ points is defined to be the maximum difference in the $k$ points.
In the context of an online gaming platform, the diameter on a line can be interpreted as the difference of players' ratings.
In this case, we show that the competitive ratio of our algorithm can be improved to $O(m + k^2)$.
Moreover, we construct an instance such that our algorithm achieves the competitive ratio of $\Omega (m/k)$.
%Considering only the skill level gaps is appropriate due to regional servers, and
%player satisfaction is primarily influenced by the maximum skill level gap in a pair.
%To reflect this, we model the metric space as a line and define the distance function as the diameter.
%Under this model, the proposed algorithm has a competitive ratio of $O(m + k^2)$.

\subsubsection{Related Work}

An online algorithm for the matching problem was first introduced by Karp et al.~\cite{karpOptimalAlgorithmOnline1990a}.
They considered the online bipartite matching problem where arriving requests are required to match upon their arrival.
% introduced Online Matching problems,
%in which an algorithm computes a bipartite matching.
%The algorithm matches arriving requests immediately to waiting requests, called \textit{servers}, sequentially.
%The problem has various applications, in particular to internet advertising.
%This problem setting has various applications, especially an internet advertising is a major application.
Since then, the problem has been studied extensively in theory and practice.
For example, motivated by internet advertising, Mehta et al.~\cite{mehtaAdWordsGeneralizedOnline2005a} generalized the problem to the AdWords problem.
%This area is studied by Mehta et al.~\cite{mehtaAdWordsGeneralizedOnline2005a},
See also Mehta~\cite{mehtaOnlineMatchingAd2013a} and Goel and Mehta~\cite{goelOnlineBudgetedMatching2008a}. 
The weighted variant of the online bipartite matching problem is considered in the literature.
It includes the vertex-weighted online bipartite matching~\cite{aggarwalOnlineVertexWeightedBipartite2011}, the problem with metric costs~\cite{bansalLog2kCompetitiveAlgorithm2007a,raghvendraRobustOptimalOnline2016,nayyarInputSensitiveOnline2017a}, and the problem with line metric cost~\cite{antoniadis2019left,fuchsOnlineMatchingLine2005,koutsoupiasOnlineMatchingProblem2004,guptaOnlineMetricMatching2012a}.
We remark that the edge-weighted online bipartite matching in general has no online algorithm with bounded competitive ratio~\cite{aggarwalOnlineVertexWeightedBipartite2011}.
% showed that 
%cannot achieve the finite competitive ratio in the general edge-weighted online bipartite matching with requests arriving in adversarial order.
% the edge-weighted online bipartite matching, 

This paper deals with a variant of the online matching problem with delays, in which arriving requests are allowed to make decision later with waiting costs.
Besides the related work~\cite{azar_polylogarithmic_2017,azar_deterministic_2020,bienkowski_primal-dual_2018,emek_online_2016} mentioned before, 
Liu et al.~\cite{liuImpatientOnlineMatching2018a} extended the problem to the one with non-linear waiting costs.
Other delay costs are studied in~\cite{liuOnlineMatchingConvex2022,deryckere2023online,azarMinCostMatchingConcave2021}.
%One of the extension is Online Matching with Delays, introduced by Emek et al.~\cite{emek_online_2016}.
%In this setting, an algorithm does not need to match arriving requests immediately but the cost linear to waiting time is incurred.
%Azar et al.~\cite{azar_polylogarithmic_2017} improved the algorithm as a randomized algorithm and, Azar and Jacob Fanani~\cite{azar_deterministic_2020} presented
%an improved deterministic algorithm. Bienkowski et al. presented a Primal-Dual algorithm and we extend this algorithm in this paper.
%Liu et al.~\cite{liuImpatientOnlineMatching2018a} extended the waiting costs to non-linear and
%other delay costs are also studied in literatures~\cite{liuOnlineMatchingConvex2022,deryckereOnlineMatchingSet2022,azarMinCostMatchingConcave2021}.
%Melnyk et al.~\cite{melnyk_online_2021} increased the number of requests in each match from two to $k$.
Ashlagi et al.~\cite{ashlagiEdgeWeightedOnlineWindowed2023} studied the online matching problem with deadlines, where each arriving request has to make a decision by her deadline.
Pavone et al.~\cite{pavoneOnlineHypergraphMatching2020} considered online hypergraph matching with deadlines.

\subsubsection{Paper Organization}

This paper is organized as follows.
In Section~\ref{sec:prelim}, we formally define the minimum-cost perfect $k$-way matching problem and $H$-metric.
We also discuss useful properties of $H$-metrics which will be used in our analysis.
In Section~\ref{sec:algorithm}, we present our main algorithm for the $k$-MPMD on $H$-metric.
In Section~\ref{sec:lowerbound}, we show that there exists an instance such that our algorithm admits an almost tight competitive ratio.
Due to the space limitation, the proofs of lemmas and theorems are omitted, which may be found in the full version of this paper.

\section{Preliminaries}\label{sec:prelim}

%First, we introduce the concept of a competitive ratio, which is a measure used to evaluate the performance of online algorithms.
%Next, we also introduce the $H$-metric and demonstrate that the diameter on a line is an example of the $H$-metric.
%Then, we formally present the $k$-MPMD problem.

\subsection{Minimum-cost Perfect $k$-way Matching with Delays}

In this section, we formally define the problem $k$-MPMD.
Let $(\chi, d)$ be a generalized metric space where $\chi$ is a set and $d: \chi^k \rightarrow [0, \infty)$ represents a distance among $k$ elements.

In the problem, $m$ requests $u_1, u_2, \ldots, u_m$ arrive one-by-one in this order.
The arrival time of $u_i$ is denoted by $\atime(u_i)$.
When $u_i$ arrives, the location $\pos(u_i)$ of $u_i$ in the metric space $\chi$ is revealed.
Thus, an instance of the problem is given as a tuple $\sigma =(V, \atime, \pos)$, where $V=\{u_1, \dots, u_m\}$,  $\atime: V\to \mathbb{R}_+$, and $\pos: V\to \chi$ such that $\atime (u_1)\leq \dots\leq \atime(u_m)$.
We note that $m$ may be unknown in advance, but we assume that $m$ is a multiple of $k$.

At any time $\tau$, with the only information for requests arrived so far, an online algorithm can make a set of $k$ requests $v_1, \ldots, v_k$ in $V$, where we say that $v_1, \ldots, v_k$ are \textit{matched}, if they satisfy the following two conditions: (a) The requests $v_1, \ldots, v_k$ have already arrived, that is, $\atime(v_i) \leq \tau$ for any $i=1,\dots, k$; (b) None of $v_1, \ldots, v_k$ has been matched to other requests yet.
\begin{comment}
\begin{itemize}
  \item The requests $v_1, \ldots, v_k$ have already arrived, that is, $\atime(v_i) \leq \tau$ for any $i=1,\dots, k$.
  \item None of $v_1, \ldots, v_k$ has never been matched to other requests yet.
\end{itemize}
\end{comment}
The cost to match $v_1, \ldots, v_k$ at time $\tau$ is defined to be
%The online algorithm has to pay a cost of 
\[
  d(\pos(v_1), \pos(v_2), \ldots, \pos(v_k)) + \sum_{i=1}^k (\tau - \atime(v_i)).
\]
The first term means the distance cost among the $k$ requests and the second term is the total waiting cost of the $k$ requests.

The objective of the problem is to design an online algorithm that matches all the requests, minimizing the total cost.
In other words, an online algorithm finds a family of disjoint subsets of size $k$ that covers all the requests.
We call a family of disjoint subsets of size $k$ a \textit{$k$-way matching}, and a $k$-way matching is called \textit{perfect} if it covers all the requests.

To measure the performance of an online algorithm, we define the competitive ratio.
For an instance $\sigma$, let $\mathcal{ALG}(\sigma)$ be the cost incurred by the online algorithm, and let $\mathcal{OPT}(\sigma)$  be the optimal cost when we know in advance a sequence of requests $V$ as well as $\atime(u_i)$ and $\pos(u_i)$ for each request $u_i$.
The \textit{competitive ratio} of the online algorithm is defined as 
$\sup_{\sigma} \frac{\mathcal{ALG(\sigma)}}{\mathcal{OPT(\sigma)}}$.
%\[
%\sup_{\sigma} \frac{\mathcal{ALG(\sigma)}}{\mathcal{OPT(\sigma)}}.
%\]

\subsection{$H$-metric}

In this section, we define $H$-metric, introduced by Melnyk et al.~\cite{melnyk_online_2021}.
Recall that a function $d:\chi^2\to[0, \infty)$ is called a \textit{distance function}~(or a \textit{metric}) if $d$ satisfies the following three axioms:
\begin{itemize}
\item \textbf{(Symmetry)} $d(p_1, p_2) = d(p_2, p_1)$ for any $p_1, p_2\in \chi$.
\item \textbf{(Positive definiteness)} $d(p_1, p_2) \geq 0$ for any $p_1, p_2\in \chi$, and $d(p_1, p_2) = 0$ if and only if $p_1 = p_2$.
\item \textbf{(Triangle inequality)} $d(p_1, p_3) \leq d(p_1, p_2) + d(p_2, p_3)$ for any $p_1, p_2, p_3\in \chi$.
\end{itemize}

We first define a $k$-point metric as a $k$-variable function satisfying generalizations of the symmetry axiom and the positive definiteness axiom.
\begin{definition}
 We call a function $d: \chi^k \rightarrow [0, \infty)$ a \emph{$k$-point metric}
  if it satisfies the following two axioms.
\begin{itemize}
  \item[$\Pi$:] For any permutation $\pi$ of $\{p_1, \ldots, p_k\}$, we have 
    $d(p_1, \ldots, p_k) = d(\pi(p_1), \ldots, \pi(p_k))$.
  \item[$O_D$:] It holds that $d(p_1, \ldots, p_k) \geq 0$. Moreover, $d(p_1, \ldots, p_k) = 0$ if and only if $p_1 = p_2 = \cdots = p_k$.
\end{itemize}
\end{definition}

There are several ways of generalizing the triangle inequality to $k$-variable functions.
One possibility is the following axiom: for any $p_1, \ldots, p_k, a \in \chi$ and any $i \in \{1, \dots, k\}$, it holds that
     \[
     \text{$\Delta_H$:\ }d(p_1, \ldots, p_k) \leq d(p_1, \ldots, p_i, \underbrace{a, \ldots, a}_{k - i}) + d(\underbrace{a, \ldots, a}_{i}, p_{i + 1}, \ldots, p_k).
     \]
We note that it is identical to the triangle inequality when $k=2$.

For a multiset $S$ on $\chi$, we denote by $elem(S)$ the set of all distinct elements contained in $S$.
In addition to the generalized triangle inequality, we consider the relationship between $d(p_1, \ldots, p_k)$ and $d(p'_1, \ldots, p'_k)$ when $elem(\{p_1, \ldots, p_k\}) \subseteq elem(\{p'_1, \ldots, p'_k\})$.
The \textit{separation axiom} $\mathcal{S}_H$ says that, for some nonnegative integer $\gamma\leq k-1$,
\begin{align*}
    d(p_1, \ldots, p_k) &\leq d(p'_1, \ldots, p'_k) \quad \text{if $elem(\{p_1, \ldots, p_k\}) \subset elem(\{p'_1, \ldots, p'_k\})$},\\
    d(p_1, \ldots, p_k) &\leq \gamma \cdot d(p'_1, \ldots, p'_k)\quad \text{if $elem(\{p_1, \ldots, p_k\}) = elem(\{p'_1, \ldots, p'_k\})$}.
\end{align*}

The $H$-metric is a $k$-point metric that satisfies all the above axioms.

\begin{definition}[Melnyk et al.~\cite{melnyk_online_2021}] \label{def:hmetric}
  A $k$-point metric $d_H: \chi^k \rightarrow [0, \infty)$ is an \emph{$H$-metric with parameter $\gamma\leq k-1$} if it satisfies $\Pi$, $O_D$, $\Delta_H$ and $\mathcal{S}_H$ with parameter $\gamma$.
\end{definition}

%\begin{definition} \label{def:hmetric}
%  A $k$-point metric $d_H: \chi^k \rightarrow [0, \infty)$ is an \emph{$H$-metric with some parameter $\gamma\leq k-1$} if the following axioms together with $\Pi$ and $O_D$ are satisfied for all $p_1, \ldots, p_k \in \chi$.
%  The metric space $(\chi, d_H)$ is called an $H$-metric space.
%
%  \begin{itemize}
%    \item[$\mathcal{S}_H$.] If $elem(\{p_1, \ldots, p_k\}) \subset elem(\{p'_1, \ldots, p'_k\})$, then we have $d_H(p_1, \ldots, p_k) \leq d_H(p'_1, \ldots, p'_k)$.
%%    Moreover, if $elem(\{p_1, \ldots, p_k\}) = elem(\{p'_1, \ldots, p'_k\})$, then we have $d_H(p_1, \ldots, p_k) \leq (k - 1)\cdot d_H(p'_1, \ldots, p'_k)$.
%    Moreover, if $elem(\{p_1, \ldots, p_k\}) = elem(\{p'_1, \ldots, p'_k\})$, then we have $d_H(p_1, \ldots, p_k) \leq \gamma \cdot d_H(p'_1, \ldots, p'_k)$.
%    \item[$\Delta_H$.] For any $a \in \chi$ and any $i \in \{1, \dots, k\}$, it holds that
%     \[d_H(p_1, \ldots, p_k) \leq d_H(p_1, \ldots, p_i, \underbrace{a, \ldots, a}_{k - i}) + d_H(\underbrace{a, \ldots, a}_{i}, p_{i + 1}, \ldots, p_k).
%     \]
%  \end{itemize}
%\end{definition}

We remark that there are weaker conditions than $\Delta_H$ and $\mathcal{S}_H$, generalizing the triangle inequality, which yields other classes of $k$-point metrics such as the $n$-metric~\cite{assafPartialNmetricSpaces2015} and the $K$-metric~\cite{khanPossibitityNtopologicalSpaces2012}.
See \cite{melnyk_online_2021} for the formal definition.
Melnyk et al.~\cite{melnyk_online_2021}, however, showed that the $k$-MPMD cannot be solved for such more general metrics.
Specifically, they proved that there exists no randomized algorithm for the $k$-MPMD $(k \geq 5)$ problem  on $n$-metric or $K$-metric agaist an oblivious adversary that has a competitive ratio
  which is bounded by a function of the number of points $n$.

\begin{comment}
\begin{theorem}[{\cite{melnyk_online_2021}}] \label{thm:impossibility}
  There exists no randomized algorithm for the $k$-MPMD $(k \geq 5)$ problem
  on $n$-metric or $K$-metric agaist an oblivious adversary that has a competitive ratio
  which is bounded by a function of the number of points $n$.
\end{theorem}
\end{comment}

\subsection{Properties of $H$-metric}\label{sec:HmetricProperty}

In this section, we discuss approximating $H$-metric by a standard metric, and present specific examples of $H$-metric.

Melnyk et al. proved that $H$-metric can be approximated by the sum of distances between all pairs~\cite[Theorem 6]{melnyk_online_2021}.
We refine their results as in the theorem below, which will be used in the next section.

\begin{theorem} \label{thm:approx_hmetric}
  Let $d_H$ be an $H$-metric on $\chi$ with parameter $\gamma$. 
  Define a metric $d: \chi^2 \rightarrow [0, \infty)$ as 
  \[
  d(p_1, p_2) := d_H(p_1, p_2, \ldots, p_2) + d_H(p_2, p_1, \ldots p_1)
  \]
  for any $p_1, p_2\in \chi$.
  Then it holds that
  \begin{equation}
    \frac{1}{\gamma k^2} \cdot \sum_{i = 1}^{k - 1}\sum_{j=i+1}^{k} d(p_i, p_j) \leq d_H(p_1, \ldots, p_k) \leq \sum_{i=1}^k d(v, p_i), \label{eq:app_hmetric}
  \end{equation}
  for all $v \in \{p_1, \ldots, p_k\}$.
\end{theorem}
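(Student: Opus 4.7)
My plan for the right-hand inequality is to peel off the arguments of $d_H$ one coordinate at a time, using the generalized triangle inequality $\Delta_H$ together with the symmetry axiom $\Pi$. Concretely, I would prove by induction on $j \in \{0, 1, \ldots, k\}$ the auxiliary statement
\begin{equation*}
d_H(p_1, \ldots, p_j, \underbrace{v, \ldots, v}_{k-j}) \ \leq\ \sum_{i=1}^{j} d_H(p_i, \underbrace{v, \ldots, v}_{k-1})
\end{equation*}
for an arbitrary $v \in \chi$. The case $j = 0$ follows from $O_D$. For the inductive step, apply $\Delta_H$ at position $j$ with $a = v$ to the tuple $(p_1, \ldots, p_{j+1}, v, \ldots, v)$; this produces two terms, the first of which is the $j$-th configuration and the second of which reduces by $\Pi$ to $d_H(p_{j+1}, v, \ldots, v)$, and the induction hypothesis closes the estimate. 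Taking $j = k$ and bounding $d_H(p_i, v, \ldots, v) \leq d_H(p_i, v, \ldots, v) + d_H(v, p_i, \ldots, p_i) = d(v, p_i)$ then yields the claim; the argument in fact works for every $v \in \chi$, not only $v \in \{p_1, \ldots, p_k\}$.

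\textbf{Lower bound.} For this direction I would invoke the separation axiom $\mathcal{S}_H$. For any $i < j$, the element set of $(p_i, p_j, \ldots, p_j)$ is a subset of the element set of $(p_1, \ldots, p_k)$, so one of the two clauses of $\mathcal{S}_H$ applies: the strict-subset clause gives a factor $1$, whereas the equal-element-set clause gives a factor $\gamma$. In either case, assuming $\gamma \geq 1$ (the only regime in which the stated factor $\tfrac{1}{\gamma k^2}$ is meaningful),
\begin{equation*}
d_H(p_i, p_j, \ldots, p_j) \ \leq\ \gamma \cdot d_H(p_1, \ldots, p_k),
\end{equation*}
and the symmetric bound $d_H(p_j, p_i, \ldots, p_i) \leq \gamma \cdot d_H(p_1, \ldots, p_k)$ follows by the same reasoning. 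Adding the two yields $d(p_i, p_j) \leq 2\gamma \cdot d_H(p_1, \ldots, p_k)$. Summing over the $\binom{k}{2} \leq k^2/2$ unordered pairs and rearranging gives the left-hand inequality.

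\textbf{Main obstacle.} Each half of the argument is short once set up, but I expect the main bookkeeping difficulty to be keeping the arguments of $d_H$ in the correct \emph{positions} when iterating $\Delta_H$: that axiom is stated with a specific prefix/suffix split, so after each peel I must invoke $\Pi$ to move the freshly peeled coordinate into a standard slot. Maintaining the invariant that, after the $j$-th peel, the $p_i$'s occupy the first $j$ slots and all remaining slots hold $v$ makes the induction transparent. Apart from this, the only other subtlety is the degenerate regime $\gamma = 0$, which is handled by the trivial replacement of the strict-subset factor $1$ with $\max(1, \gamma)$.
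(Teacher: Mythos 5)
Your proof is correct and follows essentially the same route as the paper's: the upper bound by iterating $\Delta_H$ to split off one coordinate at a time (the paper peels from the back after fixing $v=p_1$ by symmetry; your front-to-back induction with an arbitrary $v\in\chi$ is a mild and valid strengthening), and the lower bound by applying $\mathcal{S}_H$ to each pair $d_H(p_i,\ldots,p_i,p_j)$ and summing over the $\binom{k}{2}$ pairs, exactly as in the paper. The one omission is that the statement also asserts $d$ is a \emph{metric}; the paper explicitly checks symmetry, positive definiteness, and the triangle inequality for $d$ (the last via $\Delta_H$ with the split at $i=1$), and since that triangle inequality is used later in the competitive analysis, you should include this short verification.
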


\begin{proof}
  We first show that $d$ is a metric.
  The symmetry axiom holds by the definition of $d$ and the positive definiteness axiom comes from $O_D$ of $d_H$.
  The triangle inequality for $p_1, p_2, p_3$ holds by $\Delta_H$ as follows:
  \begin{align*}
    d(p_1, p_3) &= d_H(p_1, p_3, \ldots, p_3) + d_H(p_3, p_1, \ldots, p_1) \\
    &\leq d_H(p_1, p_2, \ldots, p_2) + d_H(p_2, p_3, \ldots, p_3) + d_H(p_3, p_2, \ldots, p_2) + d_H(p_2, p_1, \ldots, p_1) \\
    &= d(p_1, p_2) + d(p_2, p_3).
  \end{align*}
  \noindent
  Therefore, the function $d$ is a metric.
  
  We next prove \eqref{eq:app_hmetric}.
  By symmetry, we may assume that $v = p_1$.
  Then, by applying $\Delta_H$ repeatedly, it holds that
  \begin{align*}
    d_H(p_1, \ldots, p_k) &\leq d_H(p_1, p_2, \ldots, p_{k - 1}, p_1) + d_H(p_1, p_1, \ldots, p_1, p_k) \\
    &\leq d_H(p_1, p_2, \ldots, p_{k - 2}, p_1, p_1) + d_H(p_1, \ldots, p_1, p_{k - 1}, p_1) + d_H(p_1, \ldots, p_1, p_k) \\
    &\leq \dots \leq \sum_{i=2}^{k} d_H(p_1, \ldots, p_1, p_i) \\
    &\leq \sum_{i=2}^{k} (d_H(p_1, \ldots, p_1, p_i) + d_H(p_i, \ldots, p_i, p_1)) \\
    &\leq \sum_{i=2}^{k} d(p_1, p_i) = \sum_{i=1}^{k} d(p_1, p_i). %\leq \sum_{e \in \st(\{p_1, \ldots, p_k\}, p_1)} d(e)
  \end{align*}
  Thus the right inequality of \eqref{eq:app_hmetric} holds.

  For the left inequality, since 
  \[
    d_H(p_i, \ldots, p_i, p_j) \leq \gamma \cdot d_H(p_1, p_2, \ldots, p_k)
  \]
  for any $i, j=1,2,\dots, k$ by $\mathcal{S}_H$, we have
  %we can bound $k^3 \cdot d_H(p_1, \ldots, p_k)$ from below as follows.
  \begin{align*}
    \sum_{i=1}^{k-1} \sum_{j=i+1}^{k} d(p_i, p_j) &= \sum_{i=1}^{k-1} \sum_{j=i+1}^{k} (d_H(p_i, \ldots, p_i, p_j) + d_H(p_j, \ldots, p_j, p_i)) \\
    &\leq \frac{k(k-1)}{2} \cdot 2 \cdot \gamma \cdot d_H(p_1, \ldots, p_k) \leq \gamma k^2 \cdot d_H(p_1, \ldots, p_k).
  \end{align*}
  Thus we obtain the desired lower bound with $c = \frac{1}{\gamma k^2}$.
  \qed
\end{proof}

We conclude this section with providing specific examples of $H$-metric.
We note that the examples below satisfy that $\gamma=1$, and thus the approximation factor in Theorem~\ref{thm:approx_hmetric} becomes small.

  Let $d:\chi^2\to [0, \infty)$ be a distance function.
  We define a $k$-point metric $d_{\max}$ by $d_{\max} (p_1, \dots, p_k)=\max_{i,j\in \{1,\dots, k\}}d(p_i, p_j)$.
  Then it turns out to be an $H$-metric.
  
\begin{proposition}\label{prop:dmax}
  Let $d:\chi^2\to [0, \infty)$ be a distance function.
  Then the $k$-point metric $d_{\max}$ is an $H$-metric with $\gamma=1$.
\end{proposition}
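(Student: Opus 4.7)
The plan is to verify the four axioms $\Pi$, $O_D$, $\Delta_H$, and $\mathcal{S}_H$ (with $\gamma=1$) directly from the definition $d_{\max}(p_1,\dots,p_k)=\max_{i,j}d(p_i,p_j)$, using only the fact that $d$ is an ordinary metric on $\chi$. The axioms $\Pi$ and $O_D$ are essentially immediate: permutation invariance holds because the maximum over unordered pairs does not see the ordering of the arguments, and positive definiteness follows since $d_{\max}(p_1,\dots,p_k)=0$ iff $d(p_i,p_j)=0$ for every $i,j$, iff $p_1=\cdots=p_k$ by positive definiteness of $d$.

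For $\Delta_H$, fix $a\in\chi$ and $i\in\{1,\dots,k\}$, and let $(s^{*},t^{*})$ attain $d_{\max}(p_1,\dots,p_k)$. I would split into three cases. If $s^{*},t^{*}\le i$, then $(p_{s^{*}},p_{t^{*}})$ is a pair appearing in the tuple $(p_1,\dots,p_i,a,\dots,a)$, so the LHS is already bounded by the first term on the RHS. The case $s^{*},t^{*}>i$ is symmetric and bounded by the second term. The remaining case $s^{*}\le i<t^{*}$ is the only one that uses the triangle inequality of $d$: apply $d(p_{s^{*}},p_{t^{*}})\le d(p_{s^{*}},a)+d(a,p_{t^{*}})$ and observe that the two summands are, respectively, pair distances inside the two tuples $(p_1,\dots,p_i,a,\dots,a)$ and $(a,\dots,a,p_{i+1},\dots,p_k)$, hence each is bounded by the corresponding $d_{\max}$.

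For $\mathcal{S}_H$ with $\gamma=1$, the key observation is that $d_{\max}$ depends only on the underlying set of distinct entries, since adding repetitions of an element contributes only zero pair-distances $d(p,p)=0$ and creates no new pairs. Formally, $d_{\max}(p_1,\dots,p_k)=\max_{p,q\in elem(\{p_1,\dots,p_k\})}d(p,q)$. Thus if $elem(\{p_1,\dots,p_k\})\subseteq elem(\{p'_1,\dots,p'_k\})$ the maximum on the left is taken over a subset of the pairs considered on the right, so $d_{\max}(p_1,\dots,p_k)\le d_{\max}(p'_1,\dots,p'_k)$; equality of the element sets gives the second inequality with constant $1$. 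I do not anticipate any serious obstacle — the only nontrivial step is the mixed case in $\Delta_H$, and even there the argument is a single application of the ordinary triangle inequality followed by identifying each summand as a pair distance inside the correct tuple.
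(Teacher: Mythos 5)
Your proposal is correct and follows essentially the same route as the paper's proof: both verify $\Pi$ and $O_D$ directly, establish $\mathcal{S}_H$ with $\gamma=1$ by noting that $d_{\max}$ depends only on the set of distinct entries, and prove $\Delta_H$ by the same three-case analysis on the location of the maximizing pair, with the triangle inequality of $d$ applied only in the mixed case. No gaps.
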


\begin{proof}
  To show that $d_{\max}$ is an $H$-metric, we show $\Pi$, $O_D$, $\mathcal{S}_H$, and $\Delta_H$.
  By definition, $d_{\max}$ clearly satisfies $\Pi$ and $O_D$.
  
  We consider $\mathcal{S}_H$.
  Let $p_1, \dots, p_k, p'_1, \dots, p'_k\in \chi$.
  Then, if $elem(\{p_1, \ldots, p_k\}) \subset elem(\{p'_1, \ldots, p'_k\})$, then $d_{\max}(p_1, \ldots, p_k) \leq d_{\max}(p'_1, \ldots, p'_k)$, and 
  if $elem(\{p_1, \ldots, p_k\}) = elem(\{p'_1, \ldots, p'_k\})$, then $d_{\max}(p_1, \ldots, p_k) = d_{\max}(p'_1, \ldots, p'_k)$ holds.
  Thus $\mathcal{S}_H$ holds with parameter $\gamma =1$.

  It remains to show $\Delta_H$.
  Let $p_1, \dots, p_k\in \chi$ and $i\in\{1,2,\dots, k\}$.
  Suppose that $p, q\in \{p_1, \dots, p_k\}$ satisfy $d(p, q)=d_{\max}(p_1, \dots, p_k)$.
  
  If $p, q \in \{p_1, \ldots, p_i\}$, then we have 
  \begin{align*}
  d_{\max}(p_1, \ldots, p_k) = d(p, q) 
  &\leq d_{\max}(p_1, \ldots, p_i, a, \ldots, a)\\
  & \leq d_{\max}(p_1, \ldots, p_i, a, \ldots, a) + d_{\max}(a, \dots, a, p_{i+1}, \ldots, p_k).
  \end{align*}
  Thus the inequality $\Delta_H$ holds.
  The argument is symmetrical if $p, q \in \{p_{i + 1}, \ldots, p_k\}$.
  
  Suppose that $p \in \{p_1, \ldots, p_i\}$ and $q \in \{p_{i + 1}, \ldots, p_k\}$.
  By definition, 
  \[
      d(p, a) \leq d_{\max}(p_1, \ldots, p_i, a, \ldots, a) \ \  \text{and\ \ }
      d(q, a) \leq d_{\max}(a, \ldots, a, p_{i + 1}, \ldots, p_k).
  \]
  It follows from the triangle inequality of $d$ that
  \begin{align*}
    d_{\max}(p_1, \ldots, p_k) &= d(p, q) \leq d(p, a) + d(a, q) \\
        &\leq d_{\max}(p_1, \ldots, p_i, a, \ldots, a) + d_{\max}(a, \ldots, a, p_{i + 1}, \ldots, p_k).
  \end{align*}
  Thus the axiom $\Delta_H$ is satisfied.
  \qed
\end{proof}

For real numbers $p_1, \dots, p_k\in \mathbb{R}$, we define the \textit{diameter on a line} as $\dD(p_1, \ldots, p_k) = \max_{i, j \in \{1,\dots, k\}} |p_i-p_j|$.
By Proposition~\ref{prop:dmax}, $\dD$ is an $H$-metric.

%\begin{proposition} \label{prop:dD_hmetric}
%  The metric space $(\mathbb{R}, \dD)$ is an $H$-metric space.
%\end{proposition}

% For a distance function $d:\chi^2\to \mathbb{R}$, 
% we can define another $H$-metric $d_{\mathrm{K}}$ as follows.
% \[
% d_{\mathrm{K}} (p_1, \dots, p_k) = \sum_{e\in S} d(e)
% \]
% where $S = \{(p, q)\mid p, q\in elem(\{p_1, \dots, p_k\}), p\neq q\}$.
% This means that the sum of the distances for all the pairs in $elem(\{p_1, \dots, p_k\})$.

% \begin{proposition}\label{prop:dT}
%   Let $d:\chi^2\to [0, \infty)$ be a distance function.
%   Then the $k$-point metric $d_{\mathrm{K}}$ is an $H$-metric with parameter $\gamma=1$.
% \end{proposition}

For a distance function $d: \chi^2 \to [0, \infty)$,
we define another $H$-metric $d_{\mathrm{HC}}$ by
%define a $k$-point metric $d_{\mathrm{HC}}$ by
\[
  d_{\mathrm{HC}} (p_1, \dots, p_k) = \min \left\{ \sum_{e \in C} d(e) \mid C \subseteq \binom{\chi}{2}, C \text{ forms a Hamiltonian circuit in } \{ p_1, \dots, p_k \} \right\}
\]
where $\binom{\chi}{2}=\{(p, q)\mid p, q\in \chi, p\neq q\}$.
This means that $d_{\mathrm{HC}} (p_1, \dots, p_k)$ equals to the minimum cost of a Hamiltonian circuit contained in $\{p_1, \dots, p_k\}$ with respect to cost $d$.

\begin{proposition} \label{prop:dHC}
  Let $d: \chi^2 \to [0, \infty)$ be a distance function. Then the k-point
  metric $d_{\mathrm{HC}}$ is an $H$-metric with parameter $\gamma=1$.
\end{proposition}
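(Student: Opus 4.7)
The plan is to verify the four axioms $\Pi$, $O_D$, $\mathcal{S}_H$, and $\Delta_H$ that together define an $H$-metric with parameter $\gamma = 1$. The first two are immediate: since $d_{\mathrm{HC}}(p_1, \dots, p_k)$ depends only on the underlying set $elem(\{p_1, \dots, p_k\})$ of distinct points, the function is symmetric in its arguments, giving $\Pi$; and $O_D$ follows from the positive definiteness of the base metric $d$, because every circuit cost is nonnegative and equals zero if and only if every pair of distinct points on the circuit coincides.

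Next I would verify the separation axiom $\mathcal{S}_H$ with $\gamma = 1$. When $elem(\{p_1, \dots, p_k\}) = elem(\{p'_1, \dots, p'_k\})$, the two values of $d_{\mathrm{HC}}$ coincide by definition, so the required inequality with $\gamma = 1$ is trivial. When $elem(\{p_1, \dots, p_k\}) \subsetneq elem(\{p'_1, \dots, p'_k\})$, take an optimal Hamiltonian circuit $C'$ on the larger set and repeatedly short-cut by bypassing each vertex in the set difference; the triangle inequality of $d$ ensures that every short-cut (replacing a two-edge segment $u\text{--}v\text{--}w$ by the single edge $u\text{--}w$) does not increase the total weight, yielding the required bound.

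The main step is the generalized triangle inequality $\Delta_H$. My plan is to take optimal circuits $C_1$ and $C_2$ realizing $d_{\mathrm{HC}}(p_1, \dots, p_i, a, \dots, a)$ and $d_{\mathrm{HC}}(a, \dots, a, p_{i+1}, \dots, p_k)$ respectively; both pass through the common vertex $a$. Gluing the two circuits at $a$ produces a connected multigraph $G^{*}$ in which every vertex has even degree, so $G^{*}$ admits an Eulerian circuit of weight $cost(C_1) + cost(C_2)$. Short-cutting this Eulerian circuit first yields a Hamiltonian circuit on $elem(C_1) \cup elem(C_2)$, and a further round of short-cuts restricts it to a Hamiltonian circuit on $elem(\{p_1, \dots, p_k\}) \subseteq elem(C_1) \cup elem(C_2)$; by the triangle inequality of $d$, the final weight is at most $cost(C_1) + cost(C_2)$, which is exactly $\Delta_H$.

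The main obstacle will be handling degenerate cases in which $|elem|$ is one or two, because then a ``Hamiltonian circuit'' is not literally a simple cycle and the Eulerian-tour-plus-shortcut argument needs care. I would deal with this by adopting the convention that the cost is zero when all points coincide, working with closed walks rather than circuits in the gluing step, and noting that the small cases reduce to the already-established $\mathcal{S}_H$ (since, for instance, $|elem(C_1)| = 1$ forces all of $p_1, \dots, p_i$ to equal $a$, making $elem(\{p_1, \dots, p_k\})$ equal to $elem(\{a, p_{i+1}, \dots, p_k\})$).
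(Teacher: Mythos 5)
Your proof is correct and follows essentially the same route as the paper's: for $\Delta_H$, the paper likewise glues the two optimal circuits at the shared vertex $a$ (its concatenated circuit $C_3$) and short-cuts to a Hamiltonian circuit on $\{p_1,\dots,p_k\}$ (its $C_4$), bounding the cost by $d(C_1)+d(C_2)$ via the triangle inequality of $d$. The only differences are presentational: you package the gluing as an Eulerian tour of the combined multigraph, and you spell out the short-cutting argument for $\mathcal{S}_H$ (and the degenerate small cases), which the paper simply asserts hold by definition.
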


\begin{proof}
  To show that $d_{\mathrm{HC}}$ is an $H$-metric, we show $\Pi$, $O_D$, $\mathcal{S}_H$, and $\Delta_H$.
  By definition, $d_{\mathrm{HC}}$ satisfy $\Pi$, $O_D$, and $\mathcal{S}_H$ with parameter $\gamma =1$.

  It remains to show that $\Delta_H$ is satisfied.
  Let $a, p_1, \dots, p_k\in \chi$ and $i\in\{1,2,\dots, k\}$.
  
  Let $C_1$ and $C_2$ be minimum cost Hamiltonian circuits for $\{p_1, \dots, p_i, a, \dots, a\}$ and $\{a, \dots, a, p_{i + 1}, \dots, p_k\}$, respectively.
  We here represent $C_1$ and $C_2$ as permutations of requests,
  that is, we denote $C_1 = (a, p_{\pi_1(1)}, \dots, p_{\pi_1(i)}, a)$ and
  $C_2 = (a, p_{\pi_2(i+1)}, \dots, p_{\pi_2(k)}, a)$
  where $\pi_1, \pi_2$ are permutations of $\{1, \dots, i\}$ and $\{i+1, \dots, k\}$, respectively.
  For a circuit $C$, the cost is referred to as $d(C)$.

%   $\{p_1, \dots, p_k\}$,
%  we denote a cost function of circuits as $c(\cdot)$ and these circuits as 
%  $C = (p_{\pi(1)}, \dots, p_{\pi(k)}, p_{\pi(1)})$, 

  Define $C_3$ and $C_4$ as 
  \begin{align*}
  C_3 := (a, p_{\pi_1(1)}, \dots, p_{\pi_1(i)}, a, p_{\pi_2(i+1)}, \dots, p_{\pi_2(k)}, a)\\ 
  C_4 := (p_{\pi_1(1)}, \dots, p_{\pi_1(i)}, p_{\pi_2(i + 1)}, \dots, p_{\pi_2(k)}, p_{\pi_1(1)}).
  \end{align*}
  Then, $d(C_3) = d(C_1) + d(C_2)$ and $d(C_4) \leq d(C_3)$ hold since $d$ satisfies the triangle inequality.
  Since $C_4$ is a Hamiltonian circuit of $\{p_1, \dots, p_k\}$, $d_{\mathrm{HC}}(p_1, \ldots, p_k)\leq d(C_4)\leq d(C_3)$ hold.
%  $C$ is the minimum cost Hamiltonian circuit of $\{p_1, \dots, p_k\}$ so $c(C) \leq c(C_4) \leq c(C_3)$ hold.
  Thus, $d_{\mathrm{HC}}$ satisfies $\Delta_H$.% and is an $H$-metric.
  \qed
\end{proof}

%Define a $k$-point metric $d_{\mathrm{T}}$ by
%\[
%d_{\mathrm{T}} (p_1, \dots, p_k) = \min \left\{\sum_{e\in T}d(e)\mid T\subseteq \binom{\chi}{2}, T \text{ forms a spanning tree in } \{p_1, \dots, p_k\}\right\}.
%\]
%This means that $d_{\mathrm{T}} (p_1, \dots, p_k)$ is equal to the minimum cost of a spanning tree contained in $\{p_1, \dots, p_k\}$ with respect to the cost $d$.
%
%\begin{proposition}\label{prop:dT}
%  Let $d$ be a distance function.
%  Then the $k$-point metric $d_{\mathrm{T}}$ is an $H$-metric with parameter $\gamma=1$.
%\end{proposition}

%\section{An $O(mk^2)$-competitive algorithm for $k$-MPMD on $H$-metric space}
\section{$k$-MPMD on $H$-metric space}\label{sec:algorithm}

This section proposes a primal-dual algorithm for $k$-MPMD on $H$-metric space.
Let $(\chi, d_H)$ be an $H$-metric space with parameter $\gamma$.
% which extends the one by Bienkowski et al.~\cite{bienkowski_primal-dual_2018} for $2$-MPMD.

\subsection{Linear programming relaxation}

This subsection introduces a linear programming relaxation for computing the offline optimal value $\mathcal{OPT}(\sigma)$ for a given instance $\sigma$.
%We first define a linear programming problem with variables on subsets of requests with size $k$, and then relax it with variables on each pair of requests.

We first give some notation.
Let $\mathcal{E}=\{F\subseteq V\mid |F|=k\}$.
For any subset $S \subseteq V$, we denote $\sur(S) = |S| \bmod k$, which is the number of remaining requests when we make a $k$-way matching of size $\lfloor|S|/k\rfloor$ among $S$.
%We denote by $\delta(S)$ the set of edges between $S$ and $V\setminus S$, and by $\Delta(S)$ the set of hyperedges that intersect both $S$ and $V\setminus S$.
We denote $\Delta(S)=\{F\in \mathcal{E}\mid F\cap S \neq\emptyset, F\setminus S\neq \emptyset\}$, which is the family of $k$ request sets that intersect both $S$ and $V\setminus S$.

Preparing a variable $x_F$ for any subset $F\in \mathcal{E}$, 
we define a linear programming problem:
\begin{mpproblem}{$\mathcal{P}$}
  \begin{alignat}{3}
    & \text{min.} & \quad \sum_{F \in \mathcal{E}} & \optcost(F) \cdot x_{F}  & & \notag \\
    & \text{s.t.} & \quad \sum_{F \in \Delta(S)} & x_{F} \geq \left\lceil \frac{\sur(S)}{k} \right\rceil, & & \quad \forall S \subseteq V \label{P_constraint_1} \\
    &             & \quad & x_{F} \geq 0, & & \quad \forall F \in \mathcal{E} \notag % \label{P_constraint_2}
  \end{alignat}
\end{mpproblem}
where, for any $F=(v_1, \dots, v_k)\in \mathcal{E}$, we define
\[
\optcost(F) :=
d_H(\pos(v_1), \ldots, \pos(v_k)) + \sum_{i = 1}^{k} \left( \max_{j} \atime(v_j) - \atime(v_i) \right).
\]
Notice that $\optcost(F)$ is the cost of choosing $F$ at the moment when all the requests in $F$ have arrived.

Let $\mathcal{M}$ be a perfect $k$-way matching with optimal cost $\mathcal{OPT}(\sigma)$.
%For any perfect $k$-way matching $\mathcal{M}$, 
Define a $0$-$1$ vector $(x_F)_{F\in \mathcal{E}}$ such that $x_F=1$ if and only if $F\in \mathcal{M}$.
Then the vector satisfies the constraint \eqref{P_constraint_1}.
Moreover, the cost incurred by $F\in \mathcal{M}$ is equal to $\optcost(F)$.
This is because the optimal algorithm that returns $\mathcal{M}$ chooses $F$ at the moment when all the requests in $F$ have arrived.
% we may assume that  the waiting cost is minimized when we .
%Moreover, when the optimal algorithm chooses $F\in \mathcal{M}$, the incurred cost is equal to $\optcost(F)$.
%This is because the waiting cost is minimized when we choose $F$ at the moment when all the requests in $F$ have arrived.
Thus the objective value for the vector $(x_F)_{F\in \mathcal{E}}$ is equal to $\mathcal{OPT}(\sigma)$, and hence the optimal value of $(\mathcal{P})$ gives a lower bound of $\mathcal{OPT}(\sigma)$.

We further relax the above LP $(\mathcal{P})$ by replacing $x_F$'s with variables for all pairs of requests.
Let $E=\{(u, v)\mid u, v\in V, u\neq v\}$, and we prepare a variable $x_e$ for any $e\in E$.
We often call an element in $E$ an \textit{edge}.

We denote by $\delta(S)$ the set of pairs between $S$ and $V\setminus S$.
Define the following linear programming problem:
\begin{mpproblem}{$\mathcal{P'}$}
  \begin{alignat}{3}
    & \text{min.} & \quad \sum_{e \in E} & \frac{1}{\gamma k^2} \cdot \optcost(e) \cdot x_e & & \notag \\
    & \text{s.t.} & \quad \sum_{e \in \delta(S)} & x_e \geq \sur(S) \cdot (k - \sur(S)), & & \quad \forall S \subseteq V \label{P2_constraint_1}\\
    &             & \quad & x_e \geq 0, & & \quad \forall e \in E \notag%\label{P2_constraint_2}
  \end{alignat}
\end{mpproblem}
where, for any $e=(v_1, v_2)\in E$ with $p_1 = \pos (v_1)$ and $p_2 = \pos (v_2)$, we define
\begin{align*}
d(p_1, p_2) &:= d_H(p_1, p_2, \ldots, p_2) + d_H(p_2, p_1, \ldots, p_1),\text{ and }\\
\optcost(e) &:= d(p_1, p_2) + |\atime(v_1) - \atime(v_2)|.
\end{align*}
%with the metric $d: \chi^2 \rightarrow [0, \infty)$ as defined in Theorem~\ref{thm:approx_hmetric}.

The following lemma follows from Theorem~\ref{thm:approx_hmetric}.

\begin{lemma} \label{lmm:optcost_hmetric}
  %For an $H$-metric $d_H$, i
  It holds that, for any $F = (v_1, \ldots, v_k)\in \mathcal{E}$,
  \begin{equation}\label{eq:optcost_hmetric}
    \frac{1}{\gamma k^2} \cdot \sum_{i = 1}^{k - 1} \sum_{j = i + 1}^{k} \optcost(v_i, v_j) \leq \optcost(F) \leq \sum_{i=1}^{k} \optcost(v, v_i),
  \end{equation}
  where $v = \argmax_{u \in F} \atime(u)$.
%  It holds with $c = \frac{1}{k^2}$ if $d_H$ is the diameter $\dD$ on a line.
\end{lemma}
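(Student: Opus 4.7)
The plan is to decompose $\optcost(F)$ into its distance part $d_H(\pos(v_1),\dots,\pos(v_k))$ and its waiting part $W(F):=\sum_{i=1}^k(\atime(v)-\atime(v_i))$, decompose each $\optcost(v_i,v_j)$ analogously into $d(\pos(v_i),\pos(v_j))+|\atime(v_i)-\atime(v_j)|$, and then bound the two contributions separately. The distance parts are handled by Theorem~\ref{thm:approx_hmetric} essentially verbatim; the waiting parts reduce to elementary inequalities about real numbers. Write $t_i:=\atime(v_i)$ and $T:=\atime(v)=\max_j t_j$, so that $T-t_i\ge 0$ for every $i$.

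For the upper bound, I would first apply the right inequality of Theorem~\ref{thm:approx_hmetric} with the choice of reference point $v\in\{v_1,\dots,v_k\}$, obtaining
\[
d_H(\pos(v_1),\dots,\pos(v_k))\le \sum_{i=1}^k d(\pos(v),\pos(v_i)).
\]
Since $T=\atime(v)\ge t_i$, each waiting term equals $T-t_i=|\atime(v)-\atime(v_i)|$, so summing over $i$ yields $W(F)=\sum_{i=1}^k|\atime(v)-\atime(v_i)|$. Adding these two identities gives $\optcost(F)\le\sum_{i=1}^k\optcost(v,v_i)$, as desired; the term $i$ with $v_i=v$ merely contributes zero.

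For the lower bound, the left inequality of Theorem~\ref{thm:approx_hmetric} immediately provides
\[
\tfrac{1}{\gamma k^2}\sum_{i=1}^{k-1}\sum_{j=i+1}^{k} d(\pos(v_i),\pos(v_j))\le d_H(\pos(v_1),\dots,\pos(v_k)).
\]
What remains is the waiting part, and this is the one step that needs a short numerical argument rather than a direct invocation of an earlier result. The key observation is that, since $T$ dominates every $t_i$, for any pair $i<j$ one has $|t_i-t_j|\le (T-t_i)+(T-t_j)$. Summing over all pairs,
\[
\sum_{i=1}^{k-1}\sum_{j=i+1}^{k}|t_i-t_j|\ \le\ (k-1)\sum_{i=1}^k (T-t_i)\ =\ (k-1)\,W(F).
\]
Dividing by $\gamma k^2$ and using $(k-1)/(\gamma k^2)\le 1$ (which holds whenever $\gamma\ge 1$, the only case where the left inequality of Theorem~\ref{thm:approx_hmetric} is non-vacuous) yields $\tfrac{1}{\gamma k^2}\sum_{i<j}|t_i-t_j|\le W(F)$. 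Adding this to the distance bound produces the desired lower inequality of~\eqref{eq:optcost_hmetric}.

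The main obstacle, if any, is bookkeeping the constant $\tfrac{1}{\gamma k^2}$ consistently on both halves of the decomposition: the distance half forces this exact coefficient through Theorem~\ref{thm:approx_hmetric}, and one must then verify that the same coefficient is not too large for the waiting half. The pairwise-to-maximum inequality above shows that a coefficient of order $1/(k-1)$ would already suffice for the waiting part, so the tighter factor $1/(\gamma k^2)$ dictated by the distance part is comfortably admissible; no further slack is lost.
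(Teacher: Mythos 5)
Your proof is correct and follows essentially the same route as the paper: decompose $\optcost$ into distance and waiting parts, invoke Theorem~\ref{thm:approx_hmetric} for the distance part on both sides, and handle the waiting part by an elementary inequality. The only (immaterial) difference is in the waiting part of the lower bound, where you use $|t_i-t_j|\le (T-t_i)+(T-t_j)$ to get the factor $k-1$, while the paper sorts the arrival times and uses the telescoping identity $\sum_{i<j}|t_i-t_j|=\sum_i i(k-i)|t_{i+1}-t_i|$ to get the factor $k$; both are dominated by $\gamma k^2$, so either suffices.
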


\begin{proof}
  We first observe that, by property $\Pi$, we may assume that $\atime(v_1) \leq \atime(v_2) \leq \dots \leq \atime(v_k)$ and $v=v_k$.
  
  By Theorem~\ref{thm:approx_hmetric}, it holds that
  \[
    \frac{1}{\gamma k^2} \cdot \sum_{i = 1}^{k - 1} \sum_{j = i + 1}^{k} d(v_i, v_j) \leq d_H(F) \leq \sum_{i=1}^{k} d(v, v_i).
  \]
  % and $c = \frac{1}{k^2}$ if $d_H = \dD$.
  This implies that the right inequality holds since the second term of $\optcost(F)$ is the same as the one of $\sum_{i=1}^k \optcost(v, v_i)$ by definition.

  On the left inequality, we observe that
  \begin{equation*}
    \sum_{i = 1}^{k - 1} \sum_{j = i + 1}^{k} |\atime(v_i) - \atime(v_j)| = \sum_{i = 1}^{k - 1} i \cdot (k - i) \cdot |\atime(v_{i + 1}) - \atime(v_i)|. \label{eq:time_all_pair}
  \end{equation*}
   On the other hand, we have
   \begin{equation*}
   \sum_{i = 1}^{k} \left( \atime(v) - \atime(v_i) \right) = \sum_{i = 1}^{k - 1} i \cdot |\atime(v_{i + 1}) - \atime(v_i)|. \label{eq:time_opt_cost}
   \end{equation*}
  Hence, since $\frac{i \cdot (k - i)}{k} \leq i$ as $1 \leq k - i \leq k$, it holds that
  \begin{align*}
  \frac{1}{k} \sum_{i=1}^{k-1} \sum_{j = i + 1}^{k} |\atime(v_i) - \atime(v_j)| 
  %&\leq \sum_{i = 1}^k \left( \max_{1\leq j \leq k} \atime(v_j) - \atime(v_i) \right) \\
  &= \sum_{i = 1}^{k - 1} |\atime(v) - \atime(v_i)|.
  \end{align*}
  Thus we obtain the desired inequality.
  %$1/k^3 \sum_{i = 1}^{k - 1} \sum_{j = i + 1}^{k} \optcost(v_i, v_j) \leq \optcost(F)$ for an $H$-metric, and $1/k^2 \sum_{i = 1}^{k - 1} \sum_{j = i + 1}^{k} \optcost(v_i, v_j) \leq \optcost(F)$ for a diameter on a line.
  \qed
\end{proof}

For any perfect $k$-way matching $\mathcal{M}$, define an edge subset $M$ such that $e\in M$ if and only if the pair $e$ is contained in some set $F$ of $\mathcal{M}$.
Thus we represent each set in $\mathcal{M}$ with a complete graph of $k$ vertices.
We will show below that the characteristic vector for $M$ is feasible to $(\mathcal{P}')$.
Here, for a subset $X\subseteq E$, the characteristic vector $\mathbbm{1}_X\in\{0,1\}^E$ is defined to be 
\begin{equation*}
\mathbbm{1}_X(x) = 
\begin{cases}
  1 & (x \in X) \\
  0 & (x \notin X) 
\end{cases}.
\end{equation*}
Moreover, this implies that the optimal value of ($\mathcal{P'}$), denoted by $\mathcal{P'}(\sigma)$, is a lower bound of $\mathcal{OPT}(\sigma)$ for any instance $\sigma$.

\begin{lemma} \label{lmm:p_lower_opt}
  Let $\mathcal{M}$ be a perfect $k$-way matching.
  Define an edge subset $M = \{ (u, v) \in E \mid \exists F\in \mathcal{M}~\mathrm{s.t.}~u, v \in F \}$.
  Then $x=\mathbbm{1}_M$ is a feasible solution to $\mathcal{P'}$.
  Furthermore, $\mathcal{P'}(\sigma) \leq \mathcal{OPT}(\sigma)$ holds.
\end{lemma}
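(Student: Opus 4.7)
The proof decomposes naturally into two parts: verifying the cut inequalities \eqref{P2_constraint_1} for $x=\mathbbm{1}_M$, and bounding the induced objective value. The plan is to reduce feasibility to a clean arithmetic inequality on the quadratic $g(a):=a(k-a)$, and to deduce the objective bound in one step from Lemma~\ref{lmm:optcost_hmetric}.

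For feasibility, fix any $S \subseteq V$ and set $a_F := |F \cap S|$ for each $F \in \mathcal{M}$. Because $\mathcal{M}$ partitions $V$ into sets of size $k$, each $F \in \mathcal{M}$ contributes a complete graph on $k$ vertices to $M$, so
\[
|\delta(S) \cap M| \;=\; \sum_{F \in \mathcal{M}} a_F(k-a_F) \;=\; \sum_{F \in \mathcal{M}} g(a_F),
\]
while the right-hand side of the LP constraint is exactly $g(\sur(S))$. Since $\sum_F a_F = |S| \equiv \sur(S) \pmod{k}$, feasibility reduces to the ``subadditivity modulo $k$'' bound
\[
g(\sur(S)) \;\leq\; \sum_{F \in \mathcal{M}} g(a_F).
\]
The key step I would establish is the pairwise version: for all $a,b \in \{0,\ldots,k\}$, $g((a+b) \bmod k) \leq g(a)+g(b)$. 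This splits into two short algebraic checks. When $a+b \leq k$, the difference evaluates to $g(a)+g(b)-g(a+b) = 2ab \geq 0$. When $a+b > k$, one rewrites the target as $(a+b-k)(2k-a-b) \leq a(k-a)+b(k-b)$, which rearranges to $(a-k)(b-k) \geq 0$ and holds since $a,b \leq k$. A straightforward induction on $|\mathcal{M}|$ then yields the required inequality, establishing the constraint \eqref{P2_constraint_1}.

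For the objective-value bound, note that $M$ is exactly the disjoint union of the edge sets of the complete graphs on the members of $\mathcal{M}$, so writing $F=(v_1,\ldots,v_k)$,
\[
\sum_{e \in E} \frac{1}{\gamma k^2}\, \optcost(e)\, \mathbbm{1}_M(e) \;=\; \frac{1}{\gamma k^2} \sum_{F \in \mathcal{M}} \sum_{1 \leq i < j \leq k} \optcost(v_i,v_j).
\]
Applying the left inequality of Lemma~\ref{lmm:optcost_hmetric} to each $F$ and summing gives an upper bound of $\sum_{F \in \mathcal{M}} \optcost(F)$. Taking $\mathcal{M}$ to be an optimal perfect $k$-way matching, this quantity equals $\mathcal{OPT}(\sigma)$, yielding $\mathcal{P}'(\sigma) \leq \mathcal{OPT}(\sigma)$. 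The only real obstacle is the subadditivity-mod-$k$ lemma for $g$; once that is in hand, the rest is bookkeeping.
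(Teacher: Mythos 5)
Your proposal is correct, and its overall architecture matches the paper's: both verify feasibility of $\mathbbm{1}_M$ by writing $|\delta(S)\cap M|=\sum_{F\in\mathcal{M}}a_F(k-a_F)$ with $a_F=|F\cap S|$, and both obtain $\mathcal{P'}(\sigma)\leq\mathcal{OPT}(\sigma)$ by applying the left inequality of Lemma~\ref{lmm:optcost_hmetric} to each $F$ in an optimal matching; that second half is essentially identical. Where you diverge is in how the arithmetic core of feasibility is justified. The paper asserts that $\sum_F a_F(k-a_F)$, minimized subject to $\sum_F a_F=|S|$ and $0\leq a_F\leq k$, is attained at the extreme configuration where all but one $a_F$ lie in $\{0,k\}$ and the remaining one equals $\sur(S)$ --- a correct claim (the objective is concave and separable, so the minimum sits at a vertex of the feasible polytope), but one the paper leaves unproved. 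You instead prove the pairwise bound $g((a+b)\bmod k)\leq g(a)+g(b)$ for $g(a)=a(k-a)$ via the two identities $g(a)+g(b)-g(a+b)=2ab$ and $g(a)+g(b)-g(a+b-k)=2(k-a)(k-b)$, and then induct; both checks are correct, and the induction goes through because $\bigl((a_1+\cdots+a_{p-1})\bmod k\bigr)+a_p$ reduces mod $k$ to $\sur(S)$. Your route buys a fully self-contained, verifiable proof of the cut inequality \eqref{P2_constraint_1} at the cost of a short induction; the paper's route is shorter on the page but leans on an optimization fact stated without argument. No gap in either direction.
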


\begin{proof}
  We show that $x$ satisfies the constraints of $(\mathcal{P}')$.
  By the definition of $x$, we have $x_e \geq 0$ for any $e\in E$.
  We will show that \eqref{P2_constraint_1} is satisfied by proving $\sum_{e\in \delta(S)}x_e=|\delta(S) \cap M|\geq \sur(S) \cdot (k - \sur(S))$ for any $S\subseteq E$.

  We denote $\mathcal{M} = \{M_1, M_2, \dots, M_p\}$, and define $a_i = |S\cap M_i|$ for $i=1,\dots, p$.
  We note that $0 \leq a_i \leq k$ for $i=1,\dots, p$ and that $|\delta(S) \cap M|=\sum_{i = 1}^{\ell} a_i \cdot (k - a_i)$.
  Consider the worst case by minimizing $\sum_{i = 1}^{\ell} a_i \cdot (k - a_i)$ subject to $\sum_{i=1}^p a_i=|S|$ and $0\leq a_i\leq k$ for any $i=1,\dots, p$.
  We observe that, by letting $\ell = \left\lceil \frac{|S| + 1}{k} \right\rceil$, it is minimized when $a_i = k$ for $i=0,\dots, \ell-1$, $a_\ell = \sur(S) \equiv |S| \pmod k$, and $a_i=0$ for $i\geq \ell+1$.
  Hence $|\delta(S) \cap M|$ is at least $\sur(S) \cdot (k - \sur(S))$, and thus \eqref{P2_constraint_1} is satisfied.
  
  We next show $\mathcal{P'}(\sigma) \leq \mathcal{OPT}(\sigma)$ for a given instance $\sigma$.
  Let $\mathcal{M}^\ast$ be a perfect $k$-way matching with optimal cost $\mathcal{OPT}(\sigma)$.
  We define $M^\ast=\{(u, v) \in E\mid \exists F\in \mathcal{M}^\ast~\mathrm{s.t.}~u, v\in F\}$, and $x=\mathbbm{1}_{M^\ast}$.
  Then, by Lemma~\ref{lmm:optcost_hmetric}, we obtain
  \begin{align*}
    \mathcal{P'}(\sigma) &\leq \sum_{e \in E} \frac{1}{\gamma k^2} \cdot \optcost(e) \cdot x_e  = \sum_{F \in \mathcal{M}^\ast} \frac{1}{\gamma k^2} \cdot \sum_{e = (u, v) : u, v \in F} \optcost(e) \\
    &\leq \sum_{F \in \mathcal{M}^\ast} \optcost(F) = \mathcal{OPT}(\sigma).
  \end{align*}
  Thus the lemma holds.
  \qed
\end{proof}

The dual linear programming problem of $(\mathcal{P'})$ is 
\begin{mpproblem}{$\mathcal{D'}$}
  \begin{alignat}{3}
    & \text{max.} & \quad \sum_{S \subseteq V} & \sur(S) \cdot (k - \sur(S)) \cdot y_S & & \notag \\
    & \text{s.t.} & \quad \sum_{S : e \in \delta(S)} & y_S \leq \frac{1}{\gamma k^2} \cdot \optcost(e), & & \quad \forall e \in E \label{constraint_D}\\
    &             & \quad & y_S \geq 0, & & \quad \forall S \subseteq V \notag
  \end{alignat}
\end{mpproblem}
The weak duality of LP implies that $\mathcal{D}'(\sigma) \leq \mathcal{P}'(\sigma)$, where $\mathcal{D}'(\sigma)$ is the dual optimal value.

\subsection{Greedy Dual for \textit{k}-MPMD (GD-\textit{k})}

We present our proposed algorithm, called \textit{Greedy Dual for \textit{k}-MPMD}(GD-\textit{k}).
The proposed algorithm extends the one by Bienkowski et al.~\cite{bienkowski_primal-dual_2018} for $2$-MPMD using the LP $(\mathcal{P}')$.

In the algorithm GD-\textit{k}, we maintain a family of subsets of requests, called \textit{active sets}.
At any time, any request $v$ arrived so far belongs to exactly one active set, denoted by $A(v)$.
We also maintain a $k$-way matching $\mathcal{M}$.
A request not in $\bigcup_{F\in \mathcal{M}}F$ is called \textit{free}, and, for a subset $S\subseteq V$ of requests, $\free(S)$ is the set of free requests in $S$.

When request $v$ arrives, we initialize $A(v) = \{v\}$ and $y_S=0$ for any subset $S\subseteq V$ such that $v\in S$.
At any time, for an active set $S$ such that $\free(S)$ is nonempty, we increase $y_S$ with rate $r$, where $r$ is set to be $1/(\gamma k^2)$.
Then, at some point, there exists an edge $e=(u, v)\in E$ such that $\sum_{S : e \in \delta(S)}y_S = \frac{1}{\gamma k^2} \cdot \optcost(e)$, which we call a \textit{tight} edge.
When it happens, we merge the active sets $A(u)$ and $A(v)$ to a large subset $S=A(u)\cup A(v)$, that is, we update $A(w)=S$ for all $w\in S$.
We also mark the tight edge $e$.
If $|\free(S)| \geq k$, we partition $\free (S)$ arbitrarily into subsets of size $k$ with $\sur (S)$ free requests, and add these size-$k$ subsets to $\mathcal{M}$.

The pseudo-code of the algorithm is given as in Algorithm~\ref{alg:GD-k}.

Let $T$ be the time when all requests are matched in the algorithm.
For any subset $S$, we denote the value of $y_S$ at time $\tau$ in the algorithm by $y_S(\tau)$.

\begin{algorithm}[htb]
  \caption{Greedy Dual for \textit{k}-MPMD}
  \label{alg:GD-k}
  \begin{algorithmic}[1]
    \Procedure{GD-\textit{k}}{$\sigma$}
      \State $\mathcal{M} \gets \emptyset$
      \ForAll{moments $t$}
        \If{a request $v$ arrives}
          \State $A(v) \gets \{v\}$
          \ForAll{subsets $S \ni v$}
            \State $y_S \gets 0$
          \EndFor
          \State \textbf{modify} constraints of $(\mathcal{D}')$.
        \EndIf
        \If{there exists $e = (u, v)\in E$ such that $\sum_{S : e \in \delta(S)} y_S = \frac{1}{\gamma k^2} \cdot \optcost(e)$ and $A(u) \neq A(v)$}
          \State $S \gets A(u) \sqcup A(v)$
          \ForAll{$v \in S$}
            \State $A(v) \gets S$
          \EndFor
          \State \textbf{mark} $e$
          \While{$|\free(S)| \geq k$}
            \State choose arbitrarily a set $F$ of $k$ requests from $S$
            \State $\mathcal{M} \gets \mathcal{M}\cup \{F\}$
%            \State choose arbitrarily $k$ requests from $S$ and \textbf{match} them
          \EndWhile
        \EndIf
        \ForAll{sets $S$ which are active and $\free(S) \neq \emptyset$}
          \State increase continuously $y_S$ at the rate of $r$ per unit time
        \EndFor
      \EndFor
    \EndProcedure
  \end{algorithmic}
\end{algorithm}

%\subsubsection{Dual Feasibility of $y_S$'s}

We show that $y_S$'s maintained in Algorithm~\ref{alg:GD-k} are always dual feasible.

\begin{lemma}\label{lmm:sum_yS_lower_rt}
%\begin{lemma}[{\cite[Lemma 3.]{bienkowski_primal-dual_2018}}] \label{lmm:sum_yS_lower_rt}
  For any request $v$, it holds that 
  \begin{equation}
  \sum_{S: v \in S} y_S(\tau) \leq r \cdot (\tau - \atime(v)) \label{eq:yS_lower_rt}
  \end{equation}
  at any time $\tau \geq \atime(v)$.
  This holds with equality while $v$ is not matched.
\end{lemma}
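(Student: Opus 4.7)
The plan is to track the time-derivative of $f(t) := \sum_{S: v \in S} y_S(t)$ on the interval $t \in [\atime(v), \tau]$. Throughout the execution, the active sets partition the requests that have arrived so far (singletons at arrival; disjoint unions at each merge), so at every instant there is a unique active set $A_v(t)$ containing $v$. Of all subsets $S \ni v$, only this one has its dual variable growing, and it does so at rate $r$ precisely when $\free(A_v(t)) \neq \emptyset$; past active sets containing $v$ remain frozen at their final value but still contribute to $f(\tau)$. Consequently,
\[
\frac{d}{dt} f(t) \;=\; r \cdot \mathbbm{1}[\free(A_v(t)) \neq \emptyset] \;\leq\; r.
\]

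First I would verify that $f$ is continuous across each discrete event: a fresh arrival only initializes new $y$-variables to zero (in particular, $y_S$ for any $S$ that will eventually be formed by merges involving that request), and a merge merely redirects which single $y_S$ accumulates mass from that point on without perturbing any currently held value. Since $f(\atime(v)) = 0$ — the only set then containing $v$ is the singleton $\{v\}$, and $y_{\{v\}}$ has just been initialized to $0$ — integrating the derivative bound over $[\atime(v), \tau]$ yields the stated inequality $f(\tau) \leq r \cdot (\tau - \atime(v))$.

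For the equality assertion, observe that whenever $v$ has not yet been matched, $v$ itself belongs to $\free(A_v(t))$, so the indicator above is identically $1$ on $[\atime(v), \tau]$, and integration gives exact equality.

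The main subtlety I anticipate is the bookkeeping for subsets $S \ni v$ that were active earlier but have since been absorbed into a larger active set: their contribution to $f(\tau)$ is locked in at the moment of absorption, while subsequent growth is attributed to the new active set. The derivative identity above is precisely the correct way to express this, because at any time only the currently active set containing $v$ contributes to $\frac{d}{dt}f(t)$, and past contributions have already been accumulated into $f$. Once this piecewise-linear structure of $f$ (with slope $r$ or $0$) is made precise, the lemma follows immediately from the two integration arguments above.
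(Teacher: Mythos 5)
Your proposal is correct and follows essentially the same argument as the paper: both sides start at $0$ when $v$ arrives, only the unique active set containing $v$ can grow (at rate exactly $r$ while $v$ is free, hence equality, and at rate at most $r$ afterwards, hence the inequality). The paper phrases this as a case analysis on whether $v$ has been matched rather than via an explicit derivative-and-integration formulation, but the content is identical.
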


\begin{proof}
When $v$ arrives, Algorithm~\ref{alg:GD-k} initializes $y_S = 0$ for any $S$ with $v \in S$.
Thus the both sides of~\eqref{eq:yS_lower_rt} are $0$ at time $\tau = \atime(v)$.

Suppose that $\tau > \atime(v)$.
Then $v$ belongs to exactly one active set $A(v)$.
If $v$ has not been matched so far, then we increase $y_{A(v)}(\tau)$ with rate $r$, and $y_S(\tau)$ remains unchanged for any other subsets $S$ such that $v\in S$.
Hence the left-hand side of~\eqref{eq:yS_lower_rt} is increased with rate $r$, implying that $\sum_{S: v \in S} y_S(\tau) = r \cdot (\tau - \atime(v))$.

After $v$ has been matched, an active set $S := A(v)$ is whether $\free(S) = \emptyset$ or not.
$y_S(\tau)$ increases at most with rate $r$ in both cases.
Hence we have $\sum_{S: v \in S} y_S(\tau) \leq r \cdot (\tau - \atime(v))$.
  \qed
\end{proof}

\begin{lemma} \label{lmm:feasible}
%\begin{lemma}[{\cite[Lemma 4.]{bienkowski_primal-dual_2018}}] \label{lmm:feasible}
  Let $r = \frac{1}{\gamma k^2}$. 
  Then, at any time $\tau$, $y_S(\tau)$ maintained in Algorithm~\ref{alg:GD-k} is a feasible solution to $(\mathcal{D'})$.
\end{lemma}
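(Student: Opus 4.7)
The plan is to fix an arbitrary edge $e = (u,v) \in E$ and track the quantity
\[
f_e(\tau) := \sum_{S : e \in \delta(S)} y_S(\tau)
\]
throughout the execution, showing that it never exceeds $\frac{1}{\gamma k^2}\optcost(e)$. The proof will rely on two structural facts about Algorithm~\ref{alg:GD-k}: the active sets always form a partition of the requests that have arrived so far, and they only ever merge, never split, so the event $A(u) = A(v)$, once it occurs, persists forever.

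First I would observe that $y_S$ is modified only while $S$ is an active set with $\free(S)\neq\emptyset$, and is only ever increased. Hence $f_e$ is non-decreasing and continuous in $\tau$. Because the active sets partition the arrived requests, the only active sets $S$ that can have $e \in \delta(S)$ are $A(u)$ and $A(v)$ when these differ; when $A(u) = A(v)$, the common active set contains both endpoints and every other active set contains neither $u$ nor $v$, so no active set lies in $\{S : e \in \delta(S)\}$ and $f_e$ stops evolving. Deactivated sets keep their previous $y$-values but cannot grow, so they cannot drive any further increase of $f_e$ either.

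Next I would argue by contradiction. Suppose $f_e(\tau^*) > \frac{1}{\gamma k^2}\optcost(e)$ for some $\tau^*$. Since $f_e$ starts at $0$ and is continuous, there is an earliest time $\tau_0$ with $f_e(\tau_0) = \frac{1}{\gamma k^2}\optcost(e)$. At $\tau_0$, either $A(u) \neq A(v)$, in which case the tight-edge condition of Algorithm~\ref{alg:GD-k} fires and immediately merges $A(u)$ with $A(v)$, freezing $f_e$ from $\tau_0$ onward; or $A(u) = A(v)$, in which case the preceding paragraph already shows $f_e$ is constant on a right-neighborhood of $\tau_0$. Either way we contradict $f_e(\tau^*) > f_e(\tau_0)$.

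The main obstacle I expect is a careful treatment of the boundary instant $\tau_0$, where several edges might simultaneously attain their bounds, or where $A(u)$ and $A(v)$ may already have been unified by a chain of other tight edges without $e$ itself being marked. These cases are handled uniformly because what matters is not which edge triggered the merger but that, once $A(u) = A(v)$, no active set can contain exactly one of $u,v$, so $f_e$ is frozen. A small hygiene point I would also verify is that the reinitialization $y_S \gets 0$ upon the arrival of a new request $v$ is harmless, since no active set at time $\tau < \atime(v)$ contains $v$, so the $y_S$'s being reset are necessarily already $0$.
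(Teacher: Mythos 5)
There is a genuine gap, and it is flagged by the fact that your argument never uses the hypothesis $r=\frac{1}{\gamma k^2}$ even though the lemma is false for larger $r$. The missing step is the moment the edge $e=(u,v)$ comes into existence. Say $\atime(u)<\atime(v)$. During $[\atime(u),\atime(v))$ the active set containing $u$ is being raised at rate $r$, and every set $S$ accumulating value in that interval satisfies $u\in S$, $v\notin S$, hence $e\in\delta(S)$ the instant $v$ arrives. So $f_e$ does \emph{not} start at $0$ when the constraint for $e$ is created: it starts at a value that can be as large as $r\cdot(\atime(v)-\atime(u))$. Your contradiction argument relies on the tight-edge test catching $f_e$ exactly when it reaches $\frac{1}{\gamma k^2}\optcost(e)$, but that test is only operative once both endpoints have arrived; if the bound were already exceeded at the edge's birth, nothing in the algorithm would have prevented it. The reason it cannot be exceeded at birth is precisely Lemma~\ref{lmm:sum_yS_lower_rt}: $\sum_{S\ni u}y_S(\atime(v))\leq r\,(\atime(v)-\atime(u))$, together with $\optcost(e)=d(\pos(u),\pos(v))+|\atime(u)-\atime(v)|\geq|\atime(u)-\atime(v)|$, so the choice $r=\frac{1}{\gamma k^2}$ makes the newly created constraint hold with slack $\frac{1}{\gamma k^2}d(\pos(u),\pos(v))$. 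A concrete failure for larger $r$: two co-located requests with $\atime(v)-\atime(u)=1$ give $\optcost(e)=1$ while $f_e(\atime(v))=r$ if $u$ stays free.

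The rest of your argument is sound and matches the paper's induction for the case where no request arrives: once both endpoints are present, the only active sets with $e$ in their boundary are $A(u)$ and $A(v)$, formerly-active sets cannot grow, and the merge triggered at tightness freezes $f_e$ thereafter. Your hygiene point about the reinitialization $y_S\gets 0$ is correct but orthogonal to the real issue. To repair the proof, add the arrival-time case explicitly: when $v$ arrives, bound $\sum_{S:\,u\in S,\,v\notin S}y_S(\atime(v))$ via Lemma~\ref{lmm:sum_yS_lower_rt} and the waiting-time term of $\optcost(e)$, and only then run your freezing argument forward from $\atime(v)$.
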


\begin{proof}
  In the algorithm, $y_S$ is non-decreasing with an initial value $0$. Thus $y_S\geq 0$ for any subset $S$. 
  The rest of the proof is devoted to showing that the constraint~\eqref{constraint_D} is satisfied by induction on time.
  Suppose that the constraint~\eqref{constraint_D} is satisfied just before time $\tau$.

  Suppose that a new request $v$ arrives at time $\tau$, i.e., $\atime(v)=\tau$.
  % At this point, $(\mathcal{D'})$ makes new variables $y_{S}$ for any subset $S$ with $v\in S$, all of which are set to be $0$.
  At this point, $(\mathcal{D'})$ makes new variables $y_{\{v\} \cup S}$ for any subset $S$ of already arrived requests, all of which are set to be $0$.

  Let $e=(u, w)$ be a pair of requests such that $\atime(u) \leq \tau$ and $\atime(w) \leq \tau$.
  If $u$ or $w$ are not equal to $v$, then the constraint \eqref{constraint_D} remains satisfied, since we have
  \begin{align*}
    \sum_{S : e \in \delta(S)} y_S(\tau)
    &= \sum_{S : e \in \delta(S), v \notin S} y_S (\tau)
    +\sum_{S : e \in \delta(S), v \in S} y_S(\tau)\\
    &= \sum_{S : e \in \delta(S), v \notin S} y_S(\tau)
    \leq \frac{1}{\gamma k^2} \cdot \optcost(e).
  \end{align*}
  Suppose that $e=(u, v)$.
  Then, since $r = \frac{1}{\gamma k^2}$, it holds by Lemma~\ref{lmm:sum_yS_lower_rt} that
  \begin{align*}
    \sum_{S : e \in \delta(S)} y_S(\tau)
    = \sum_{S : u \in S, v \notin S} y_S(\tau)
    &\leq \frac{1}{\gamma k^2} \cdot (\atime(v) - \atime(u)) \\
    & \leq \frac{1}{\gamma k^2} \cdot (d(\pos(u), \pos(v)) + |\atime(u) - \atime(v)|)\\
    & = \frac{1}{\gamma k^2} \cdot \optcost(e).
  \end{align*}
  Thus the constraint~\eqref{constraint_D} holds when a new request $v$ arrives at time $\tau$.
  
  Suppose that no request arrives at time $\tau$.
  Let $e=(u, v)$ be a pair of requests such that $\atime(u) \leq \tau$ and $\atime(v) \leq \tau$.
  We observe that, if the pair $e$ is tight, $y_S(\tau)$ does not increase for any subset $S$ with $e \in \delta(S)$, since no active set $S$ satisfies $e \in \delta(S)$ by Algorithm~\ref{alg:GD-k}.
  Thus, after the pair $e=(u, v)$ becomes tight, $\sum_{S : e \in \delta(S)} y_S(\tau)$ does not increase, implying that the constraint~\eqref{constraint_D} keeps satisfied.
  
  Therefore, $y_S$'s are dual feasible at any time.
  \qed
\end{proof}

\subsection{Competitive Ratio of GD-$k$}

To bound the competitive ratio of GD-\textit{k}, we evaluate the distance cost and the waiting cost separately.
We will show that each cost is upper-bounded by the dual optimal value of $\mathcal{D'}(\sigma)$.

\subsubsection{Waiting Cost}

We can upper-bound the waiting cost of the output as follows.
\begin{lemma} \label{lmm:time_cost}
  Let $\mathcal{M}=\{M_1, \dots, M_p\}$ be a perfect $k$-way matching returned by Algorithm~\ref{alg:GD-k}, and let $\tau_\ell$ be the time when we match $M_\ell$.
  % Define $y_S(\tau)$ as a value of $y_S$ at a time $\tau$.
  Then it holds that
  \[
     \sum_{\ell=1}^{p}\sum_{i=1}^k (\tau_\ell - \atime(v_{\ell, i})) = \frac{1}{r} \cdot \sum_{S \subseteq V} \sur(S) \cdot y_S(T) \leq \frac{1}{r} \cdot \mathcal{D'}(\sigma),
  \]
    where we denote $M_\ell = \{v_{\ell,1}, \dots, v_{\ell, k}\}$.
%  The sum of $time cost$ by GD-$k$ equals to $\frac{1}{r} \cdot \sum_{S \subseteq V} \sur(S) \cdot y_S(T)$.
%  This value can be bounded from above by $\frac{1}{r} \cdot \mathcal{D'}(\sigma)$.
\end{lemma}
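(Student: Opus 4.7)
The plan is to express both sides of the claimed equality as time integrals over the algorithm's execution, then apply weak LP duality. Before that, I would establish a structural invariant that connects the set-level quantity $\sur(S)$ to the number of currently unmatched requests.

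The key invariant is: at every time $t$, for every active set $S$ maintained by the algorithm, $|\free(S)(t)| = \sur(S)$. I would prove this by induction on the algorithm's events. When a request $v$ arrives, it initializes the singleton $A(v)=\{v\}$, for which $|\free(\{v\})|=1=\sur(\{v\})$. When two active sets $S_1,S_2$ are merged along a tight edge, just after the merge $|\free(S_1\cup S_2)|=|\free(S_1)|+|\free(S_2)|=\sur(S_1)+\sur(S_2)\le 2(k-1)<2k$. Thus the \texttt{while} loop in Algorithm~\ref{alg:GD-k} executes at most once, and after it the remaining number of free requests equals $(\sur(S_1)+\sur(S_2))\bmod k=\sur(S_1\cup S_2)$, preserving the invariant.

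Using this invariant, the waiting cost admits a double-counting rewriting: the contribution of each free request to the waiting cost is $1$ per unit of time, so
\[
\sum_{\ell=1}^{p}\sum_{i=1}^{k}(\tau_\ell-\atime(v_{\ell,i}))=\int_0^T|\{v:v\text{ unmatched at }t\}|\,dt=\int_0^T\sum_{S\text{ active at }t}\sur(S)\,dt.
\]
Meanwhile, by the algorithm, $y_S$ grows at rate $r$ exactly when $S$ is active and $\free(S)\neq\emptyset$, i.e., when $\sur(S)>0$. Hence
\[
\sum_{S\subseteq V}\sur(S)\cdot y_S(T)=r\int_0^T\sum_{S\text{ active},\,\sur(S)>0}\sur(S)\,dt=r\cdot\sum_{\ell=1}^{p}\sum_{i=1}^{k}(\tau_\ell-\atime(v_{\ell,i})),
\]
which is the desired equality.

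For the inequality, since $\sur(S)\in\{0,1,\dots,k-1\}$, we have $k-\sur(S)\ge 1$ whenever $\sur(S)>0$, so $\sur(S)\le \sur(S)\cdot(k-\sur(S))$ term-by-term. Combining this with Lemma~\ref{lmm:feasible} (which asserts that $y(T)$ is feasible for $(\mathcal{D}')$) and weak LP duality gives
\[
\sum_{S\subseteq V}\sur(S)\cdot y_S(T)\le\sum_{S\subseteq V}\sur(S)\cdot(k-\sur(S))\cdot y_S(T)\le\mathcal{D}'(\sigma),
\]
and dividing by $r$ yields the claimed upper bound. The main obstacle is the invariant: it requires careful bookkeeping of the merge-and-match step, especially in handling the boundary case $\sur(S)=0$ (where an active set still exists but contributes nothing to either integral), and in verifying that the \texttt{while} loop runs at most once so that the modular identity $\sur(S_1)+\sur(S_2)\equiv\sur(S_1\cup S_2)\pmod k$ is realized by at most one deletion of $k$ requests.
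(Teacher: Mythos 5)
Your proof is correct and takes essentially the same route as the paper's: rewrite the waiting cost as $\int_0^T \sum_{S\ \mathrm{active},\, \free(S)\neq\emptyset} \sur(S)\,\mathrm{d}\tau$, identify this with $\frac{1}{r}\sum_{S}\sur(S)\,y_S(T)$ via the growth rate $r$ of the active dual variables, and then bound by the dual objective using $k-\sur(S)\geq 1$ together with feasibility of $y(T)$. The only difference is that you explicitly prove the invariant $|\free(S)|=\sur(S)$ for active sets (including the observation that the while-loop fires at most once per merge), which the paper asserts without proof as ``$S$ has $\sur(S)$ free requests by the algorithm.''
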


\begin{proof}
Consider sufficiently small time period $\Delta t$ that no new request arrives and no pair becomes tight.
Let $S$ be an active set that contains free requests.
Then, $S$ has $\sur (S)$ free requests by the algorithm, and they wait during the time period $\Delta t$, and hence the waiting cost incurred by these requests in this period is $\sur (S) \cdot \Delta t$.
This implies that  the total waiting cost is
\[
    \int_0^T \sum_{S\in \mathcal{A}(\tau)} \sur(S)~\mathrm{d\tau},
\]
where $\mathcal{A}(\tau)$ is the family of the active sets that contain free requests at time $\tau$.
Defining $\mathbbm{1}_{\mathcal{A}(\tau)}(S)$ as $1$ if $S\in \mathcal{A}(\tau)$ and $0$ otherwise, we see that it is equal to
  \begin{align*}
    \int_0^T \sum_{S\subseteq V} \sur(S) \cdot \mathbbm{1}_{\mathcal{A}(\tau)}(S)~\mathrm{d\tau} 
    &= \sum_{S\subseteq V} \sur(S) \int_0^T \mathbbm{1}_{\mathcal{A}(\tau)}(S)~\rm{d\tau} \\
    &= \frac{1}{r} \cdot \sum_{S\subseteq V} \sur(S)\cdot y_S(T),
  \end{align*}
  where the last equality follows from the observation that Algorithm~\ref{alg:GD-k} increases $y_S$ by $r \cdot \Delta t$ for $S\in\mathcal{A}(\tau)$ during sufficiently small time period $\Delta t$.
%Therefore, the contribution of $y_S$ to the dual objective value of $\mathcal{D}'$ is $r \cdot \sur(S) (k - \sur(S)) \cdot \Delta t$.
  Since $1\leq k - \sur(S)$ by $\sur(S) < k$, the total waiting cost is upper-bounded by $\frac{1}{r} \sum_S \sur(S) \cdot (k - \sur(S)) \cdot y_S(T) = \frac{1}{r}  \mathcal{D'}(\sigma)$.
%  We note that the objective value of $\mathcal{D'}$ is $\sum_S \sur(S) \cdot (k - \sur(S)) \cdot y_S(T)$.
  \qed
\end{proof}

\subsubsection{Distance Cost}

We say that a set $S\subseteq V$ is \textit{formerly-active at time} $\tau$ if $S$ is not active at time $\tau$, but has been active before time $\tau$.

%\begin{lemma}[{\cite[Lemma 8.]{bienkowski_primal-dual_2018}}] \label{lmm:mark_spanning_tree}
\begin{lemma} \label{lmm:mark_spanning_tree}
  Let $S$ be an active or formerly-active set at time $\tau$.
  Then, marked edges both of whose endpoints are contained in $S$ form a spanning tree in $S$.
\end{lemma}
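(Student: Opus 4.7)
The plan is to prove the claim by induction on the sequence of merge events performed by Algorithm~\ref{alg:GD-k}. The key structural observation is that an active set is always built up from singletons by a sequence of merges, and each merge marks exactly one new edge, namely the tight edge that triggered the merge.

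For the base case, note that when a request $v$ arrives, the algorithm initializes $A(v) = \{v\}$. This is an active set of size one, and trivially the empty set of marked edges is a spanning tree on a single vertex. For the inductive step, suppose that at some moment the algorithm merges two active sets $A(u)$ and $A(v)$ into $S = A(u) \sqcup A(v)$ via a tight edge $e = (u,v)$. By the induction hypothesis, the marked edges with both endpoints in $A(u)$ form a spanning tree $T_u$ of $A(u)$, and similarly $T_v$ for $A(v)$. The only edges of $S$ that are marked just after the merge are those of $T_u$, those of $T_v$, and the newly marked edge $e$, which joins a vertex of $A(u)$ with a vertex of $A(v)$. Hence $T_u \cup T_v \cup \{e\}$ is connected, spans $S$, and has $|A(u)| + |A(v)| - 1 = |S| - 1$ edges, so it is a spanning tree of $S$.

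It remains to handle formerly-active sets, i.e., sets $S$ that were active at some earlier moment but are no longer active at time $\tau$. I will argue that no new edge fully contained in $S$ can ever become marked after $S$ ceases to be active. Indeed, if $S$ is no longer active at time $\tau$, then $S$ has been absorbed into some strictly larger active set, so every vertex $w \in S$ now satisfies $A(w) \supsetneq S$. In particular, any two vertices $u, v \in S$ share the same active set $A(u) = A(v)$. Since the algorithm only marks edges $(u', v')$ with $A(u') \neq A(v')$, no edge of the form $(u,v)$ with $u, v \in S$ can be marked after $S$ ceases to be active. Therefore the set of marked edges within $S$ is frozen from the moment $S$ was first created, and by the inductive argument above it is a spanning tree of $S$.

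The argument is essentially bookkeeping, so there is no real technical obstacle; the only subtlety is making precise that marked edges within a formerly-active set cannot change, which follows from the active-set merging invariant and the guard $A(u) \neq A(v)$ in the marking step of Algorithm~\ref{alg:GD-k}.
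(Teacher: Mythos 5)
Your proof is correct and takes essentially the same route as the paper: induction over the merge events, with the base case of singletons and the inductive step gluing the two spanning trees of $A(u)$ and $A(v)$ by the newly marked tight edge. The only difference is that you make explicit why the marked edges inside a formerly-active set are frozen (all its vertices thereafter share one active set, and the guard $A(u)\neq A(v)$ prevents further marking), a point the paper's proof leaves implicit.
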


\begin{proof}
  At time $0$, there are no marked edges, and hence the statement holds.
  Suppose that the statement holds before time $\tau$.
  Also, an active or formerly-active set $S$ of size $1$ includes no marked edges, which means that the statement holds. 

  Consider the moment when some pair $e = (u, v)$ becomes tight and $A(u) \neq A(v)$.
  Then we merge two active sets $A(u)$ and $A(v)$ into one active set $S=A(u)\cup A(v)$, and mark the edge $e=(u, v)$.
  By induction, the marked edges in $A(u)$ and $A(v)$ form spanning trees, respectively.
  Since $A(u)$ and $A(v)$ are disjoint, the spanning trees are disjoint.
  Hence the two spanning trees with the edge $e$ form a spanning tree in $S$.
  \qed
\end{proof}

We now evaluate the distance cost.

\begin{lemma} \label{lmm:dist_cost}
  Let $\mathcal{M}=\{M_1, \dots, M_p\}$ be a perfect $k$-way matching returned by Algorithm~\ref{alg:GD-k}.
  Then it holds that
  \[
     \sum_{\ell=1}^{p} d(\pos(v_{\ell,1}), \dots, \pos(v_{\ell, k})) \leq 4\gamma mk\cdot \sum_S \sur(S) \cdot (k - \sur(S)) \cdot y_S(T) \leq 4\gamma mk\mathcal{D'}(\sigma),
  \]
  where we denote $M_\ell = \{v_{\ell,1}, \dots, v_{\ell, k}\}$.
\end{lemma}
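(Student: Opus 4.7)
The plan is to charge the distance cost of each matched group $M_\ell$ against the dual variables by routing through the spanning tree of marked edges. First, I would use the right-hand inequality of Theorem~\ref{thm:approx_hmetric} to replace each $d_H$ on a $k$-tuple by a sum of $k$ pairwise $d$-distances centred at a representative $v_\ell^{\ast}\in M_\ell$. Second, I would route each such pairwise distance along the path in the spanning tree $T^\ell$ of the active set $S^\ell$ containing $M_\ell$ at the matching time $\tau_\ell$; such a tree exists by Lemma~\ref{lmm:mark_spanning_tree}. Third, I would exploit the tightness of marked edges to convert the tree-edge distances into dual variables, and finally swap the order of summation to collect terms indexed by $S$.

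Concretely, letting $P_{\ell,i}$ denote the path in $T^\ell$ from $v_\ell^{\ast}$ to $v_{\ell,i}$, the chain of inequalities I have in mind is
\begin{align*}
\sum_{\ell=1}^{p} d_H\bigl(\pos(v_{\ell,1}),\dots,\pos(v_{\ell,k})\bigr)
&\leq \sum_{\ell=1}^{p}\sum_{i=1}^{k} d(\pos(v_\ell^{\ast}),\pos(v_{\ell,i})) \\
&\leq \sum_{\ell=1}^{p}\sum_{i=1}^{k}\sum_{e\in P_{\ell,i}} d(e) \\
&\leq \gamma k^{2}\sum_{S\subseteq V} y_{S}(T)\,N(S),
\end{align*}
where $N(S):=\sum_{\ell,i}\lvert P_{\ell,i}\cap\delta(S)\rvert$. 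The first step is Theorem~\ref{thm:approx_hmetric}, the second is the triangle inequality for the metric $d$ applied along the tree path, and the third uses that for any marked edge $e$ the quantity $\sum_{S:e\in\delta(S)}y_{S}$ stops evolving once the endpoints of $e$ are merged into a common active set, so it equals $\optcost(e)/(\gamma k^{2})\geq d(e)/(\gamma k^{2})$.

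The main obstacle is bounding $N(S)$ to yield the factor $4\gamma mk\cdot\sur(S)(k-\sur(S))$. I would combine two counting ingredients: inside a fixed tree $T^\ell$, each edge lies on at most $k-1$ of the paths $P_{\ell,1},\dots,P_{\ell,k}$, because deleting it separates $v_\ell^{\ast}$ from only those $v_{\ell,i}$'s in the opposite component; and any fixed marked edge $e$ belongs to $T^\ell$ for at most $m/k$ indices $\ell$, since once the endpoints of $e$ share an active set at most $m/k$ groups can be formed from a single set. To absorb these into the target bound, I would use that $y_{S}(T)>0$ forces $\sur(S)\geq 1$, hence $\sur(S)(k-\sur(S))\geq k-1\geq k/2$. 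The delicate point, which I expect to be the genuine obstacle, is controlling the number of marked edges in $W\cap\delta(S)$ along the chain of merges that swallow $S$; one must argue that this count is bounded by $O(m)$ rather than by the potentially larger number of cut edges in a global tree, and this is where the constant $4$ ultimately enters. The final inequality $\sum_{S}\sur(S)(k-\sur(S))y_{S}(T)\leq \mathcal{D}'(\sigma)$ is immediate from the definition of~($\mathcal{D}'$).
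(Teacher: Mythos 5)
Your overall route --- applying Theorem~\ref{thm:approx_hmetric} to pass to pairwise $d$-distances centred at a representative, routing each one along the spanning tree of marked edges guaranteed by Lemma~\ref{lmm:mark_spanning_tree}, and using tightness to rewrite each tree edge as $\gamma k^2\sum_{S:e\in\delta(S)}y_S$ --- is exactly the paper's, and your three displayed inequalities are correct. The gap is precisely the step you flag: bounding $N(S)=\sum_{\ell,i}\lvert P_{\ell,i}\cap\delta(S)\rvert$. Your two counting ingredients do not close it. They bound the number of (path, group) incidences \emph{per marked edge of $\delta(S)$} by $(k-1)\cdot m/k\le m$, so you would still need the number of marked edges crossing $\delta(S)$ to be $O(1)$; but that number can be $\Theta(m)$, since every later merge of the active set containing $S$ may be triggered by a tight edge with one endpoint inside $S$. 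Your ingredients therefore only give $N(S)=O(m^2)$, which loses a factor of $m$ against the claimed bound.

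The missing idea is a \emph{per-path} bound rather than a per-edge bound: for every set $S'$ that is active or formerly active (and these are the only sets with $y_{S'}>0$), each individual tree path satisfies $\lvert P_{\ell,i}\cap\delta(S')\rvert\le 2$. Indeed, the marked edges inside $S'$ form a spanning tree of $S'$ by Lemma~\ref{lmm:mark_spanning_tree}; if $P_{\ell,i}$ left $S'$ and re-entered it, the excursion together with the marked path inside $S'$ joining the exit and re-entry vertices would form a cycle of marked edges, contradicting the fact that the marked edges form a forest (active sets are disjoint and no marked edge crosses the boundary of a current active set). With this claim, $N(S')\le 2(k-1)\cdot\frac{m}{k}\le 2m\le \frac{4m}{k}\,\sur(S')\,(k-\sur(S'))$, where the last step uses $\sur(S')(k-\sur(S'))\ge k-1\ge k/2$ exactly as you propose, and the lemma follows. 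Everything else in your outline is sound; you only need to replace the per-edge accounting by this two-crossings-per-path argument.
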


\begin{proof}
  Suppose that, at time $\tau$, some pair $e=(u, v)$ becomes tight and that $|\free (S)|\geq k$, where $S=A(u)\cup A(v)$.
  Then we choose a set $Y$ of $k$ requests arbitrarily from $\free (S)$, which is added to $\mathcal{M}$.
  
  We will evaluate the cost of choosing $Y$ at time $\tau$.
  For simplicity, we denote $Y=\{v_1, \dots, v_k\}$, where $\atime (v_1)\geq \atime (v_i)$ for any $i=1,\dots, k$. 
  We also denote $\pos (v_i)=p_i$ for $i=1,\dots, k$.
  
  Let $F$ be the set of marked edges in $S$.
  By Lemma~\ref{lmm:mark_spanning_tree}, $F$ forms a spanning tree in $S$. 
  Since $F$ is a spanning tree, for any pair $v_i, v_j$ in $Y$, there exists a unique path $P_{i, j}$ from $v_i$ to $v_j$ along $F$.
  By the triangle inequality, $d(v_i, v_j)$ is upper-bounded by the total length of $P_{i, j}$.
  By Theorem~\ref{thm:approx_hmetric}, it holds that
  \begin{align*}
    d_H(p_1,\dots, p_k) &\leq \sum_{i=1}^k d(p_1, p_i) = \sum_{i = 2}^{k} d(p_1, p_i) \leq \sum_{i = 2}^{k} \sum_{e \in P_{1, i}} d(e)\\
    &\leq \sum_{i=2}^k \gamma k^2 \sum_{e \in P_{1, i}} \frac{1}{\gamma k^2} \cdot \optcost(e),    
  \end{align*}
  where the last inequality follows since $d(e)\leq \optcost(e)$.
  Since each pair in $P_{1, i}$ is tight, it is equal to 
  \[
    \gamma k^2 \cdot \sum_{i = 2}^{k} \sum_{e \in P_{1, i}} \sum_{S' : e \in \delta(S')} y_{S'}(\tau)
     \leq \gamma k^2 \sum_{i = 2}^{k} \sum_{S'\subseteq V} | P_{1, i} \cap \delta(S') | \cdot y_{S'}(\tau).
  \]

  \begin{claim}\label{clm:1}
  Let $S'$ be an active or formerly-active subset at time $\tau$.
  Then it holds that $|P_{i, j} \cap \delta(S')| \leq 2$ for any $v_i$ and $v_j$.
  \end{claim}
  \begin{proof}
  Suppose to the contrary that $|P_{i, j} \cap \delta(S')| > 2$ for some $v_i$ and $v_j$.
  This means that there exist $u, w \in S', u', w' \notin S'$ such that $P_{i,j}$ forms $P_{i, j} = (v_i, \ldots, u, u', \ldots, w', w, \ldots, v_j)$ and the subpath from $u$ to $w$ is internally disjoint from $S'$.

  Since $S'$ has a spanning tree formed by the marked edges, $S'$ contains a path from $u$ to $w$ with the marked edges.
  This path, together with $P_{i,j}$, forms a cycle with the marked edges.
  Since active sets are disjoint and $\delta (S)$ has no marked edges for any active subset $S$, this contradicts Lemma~\ref{lmm:mark_spanning_tree}.
  \qed
  \end{proof}

  Let $\mathcal{S}$ be the family of active or formerly-active sets at time $\tau$.
  We observe that, if $y_{S'}(\tau) > 0$, then $S'$ is active or formerly-active.
  Hence the above claim implies that
  \[
    d_H(p_1, \ldots, p_k) 
    \leq \gamma k^2 \sum_{i = 2}^{k} \sum_{S' \in \mathcal{S}} | P_{1, i} \cap \delta(S') | \cdot y_{S'}(\tau)
    \leq 2\gamma k^2 \sum_{i = 2}^{k} \sum_{S' \in \mathcal{S}} y_{S'}(T),
   \]
   since $y_{S'}(\tau)\leq y_{S'}(T)$ for any subset $S'$.
  Since $k \leq 2 \cdot \sur(S') \cdot (k - \sur(S'))$ as $1 \leq \sur(S') \leq k - 1$, it holds that
  \begin{align*}
    d_H(p_1, \ldots, p_k) %&\leq k^3 \sum_{i = 2}^{k} \sum_{S' \in \mathcal{S}} | P_{1, i} \cap \delta(S') | \cdot y_{S'}(T) \\
	    &\leq    4 \cdot \gamma k \sum_{i=2}^{k} \sum_{S' \in \mathcal{S}} \sur(S') \cdot (k - \sur(S')) \cdot y_{S'}(T) \\
    &\leq 4 \cdot \gamma k \cdot (k - 1) \sum_{S' \in \mathcal{S}} \sur(S') \cdot (k - \sur(S')) \cdot y_{S'}(T) \\
    &\leq 4 \gamma k^2 \cdot \sum_{S' \in \mathcal{S}} \sur(S') \cdot (k - \sur(S')) \cdot y_{S'}(T)
    = 4 \gamma k^2 \mathcal{D'}(\sigma).
  \end{align*}

  Since the total number of requests is $m$, the final $k$-way matching has $\frac{m}{k}$ subsets.
  Therefore, the total distance cost is at most
    \[
    \frac{m}{k} \cdot 4\gamma k^2 \mathcal{D'}(\sigma) = 4\gamma mk \mathcal{D'}(\sigma).
  \]
  Thus the lemma holds.
  \qed
\end{proof}

\subsubsection{Competitive Ratio}

Summarizing the above discussion, we obtain Theorem~\ref{thm:GD-k-comp}.

\begin{theorem} \label{thm:GD-k-comp}
  Let $d_H$ be an $H$-metric with parameter $\gamma$.
  Setting $r=1/(\gamma k^2)$, Greedy Dual for $k$-MPMD achieves a competitive ratio $(4mk + k^2)\gamma$ for $k$-MPMD.
\end{theorem}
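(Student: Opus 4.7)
The plan is to observe that at this point the heavy lifting has already been done by Lemmas~\ref{lmm:time_cost} and~\ref{lmm:dist_cost}, which separately bound the two components of the algorithm's cost in terms of the dual objective $\mathcal{D}'(\sigma)$. The remaining task is simply to assemble these bounds and chain them with LP weak duality and Lemma~\ref{lmm:p_lower_opt}.

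Concretely, first I would decompose the total cost $\mathcal{ALG}(\sigma)$ of the $k$-way matching $\mathcal{M}=\{M_1,\dots,M_p\}$ returned by Algorithm~\ref{alg:GD-k} into the distance part $\sum_\ell d_H(\pos(v_{\ell,1}),\dots,\pos(v_{\ell,k}))$ and the waiting part $\sum_\ell \sum_{i=1}^k (\tau_\ell-\atime(v_{\ell,i}))$, where $\tau_\ell$ is the moment at which $M_\ell$ is formed by the algorithm. Next I would substitute $r=1/(\gamma k^2)$ into Lemma~\ref{lmm:time_cost} to conclude that the waiting part is at most $\gamma k^2\cdot \mathcal{D}'(\sigma)$, and invoke Lemma~\ref{lmm:dist_cost} to conclude that the distance part is at most $4\gamma mk\cdot \mathcal{D}'(\sigma)$. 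Adding these two estimates gives
\[
 \mathcal{ALG}(\sigma)\ \leq\ (4\gamma mk+\gamma k^2)\cdot \mathcal{D}'(\sigma)\ =\ (4mk+k^2)\gamma\cdot \mathcal{D}'(\sigma).
\]

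To turn this into a competitive ratio against $\mathcal{OPT}(\sigma)$, I would appeal to Lemma~\ref{lmm:feasible}, which guarantees that the $y_S$'s maintained by the algorithm at termination are dual feasible, and to LP weak duality, which gives $\mathcal{D}'(\sigma)\leq \mathcal{P}'(\sigma)$. Combining this with Lemma~\ref{lmm:p_lower_opt}, which states $\mathcal{P}'(\sigma)\leq \mathcal{OPT}(\sigma)$, yields $\mathcal{D}'(\sigma)\leq \mathcal{OPT}(\sigma)$, and therefore
\[
 \mathcal{ALG}(\sigma)\ \leq\ (4mk+k^2)\gamma\cdot \mathcal{OPT}(\sigma),
\]
as claimed. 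Taking the supremum over instances $\sigma$ finishes the proof.

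There is essentially no hard step left: the entire machinery (LP relaxation, approximation of the $H$-metric by a pairwise metric, dual feasibility of the maintained $y_S$, and the two cost bounds) has already been established. The only minor care point is to make sure the value $r=1/(\gamma k^2)$ is consistent with the usage in both Lemmas~\ref{lmm:feasible} and~\ref{lmm:time_cost}, so that dual feasibility holds \emph{and} the waiting cost is amplified by exactly a $\gamma k^2$ factor; this is the only place where the choice of $r$ is tuned, and any deviation would either break feasibility or worsen the waiting-cost term.
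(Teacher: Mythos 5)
Your proposal is correct and follows essentially the same route as the paper: sum the waiting-cost bound of Lemma~\ref{lmm:time_cost} (with $r=1/(\gamma k^2)$) and the distance-cost bound of Lemma~\ref{lmm:dist_cost}, then chain $\mathcal{D}'(\sigma)\leq\mathcal{P}'(\sigma)\leq\mathcal{OPT}(\sigma)$ via Lemma~\ref{lmm:feasible}, weak duality, and Lemma~\ref{lmm:p_lower_opt}. No gaps.
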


\begin{proof}
  Let $\sigma$ be an instance of $k$-MPMD.
  It follows from Lemmas~\ref{lmm:time_cost} and~\ref{lmm:dist_cost} that 
  the cost of the returned perfect $k$-way matching is upper-bounded by $(4mk + k^2)\gamma\cdot\mathcal{D'}(\sigma)$.
  By the weak duality and Lemma~\ref{lmm:feasible}, we observe that $\mathcal{D'}(\sigma)\leq \mathcal{P'}(\sigma) \leq \mathcal{OPT}(\sigma)$.
  Thus the theorem holds.
  \qed
\end{proof}

%\subsection{Applications to Specific $H$-metrics}

Finally, we consider applying our algorithm to the problem with specific $H$-metrics such as $d_{\max}$ and $d_{\mathrm{HC}}$ given in Section~\ref{sec:HmetricProperty}.
Since they have parameter $\gamma =1$, it follows from Theorem~\ref{thm:GD-k-comp} that GD-$k$ achieves a competitive ratio $O(mk + k^2)$.
In the case of $d_{\max}$, we can further improve the competitive ratio.

\begin{theorem}\label{thm:diameter}
  For the $k$-MPMD on a metric space $(\chi, d_{\max})$, GD-$k$ achieves a competitive ratio $O(m + k^2)$.
\end{theorem}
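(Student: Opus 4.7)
The plan is to refine the distance-cost analysis of Lemma~\ref{lmm:dist_cost} using the special structure of $d_{\max}$, while leaving the waiting-cost analysis (Lemma~\ref{lmm:time_cost}) unchanged. Since $d_{\max}$ is an $H$-metric with $\gamma = 1$ by Proposition~\ref{prop:dmax}, the waiting cost is already bounded by $\frac{1}{r} \mathcal{D}'(\sigma) = k^2 \mathcal{D}'(\sigma)$. So the task reduces to sharpening the distance-cost bound from $O(mk)\cdot \mathcal{D}'(\sigma)$ to $O(m)\cdot \mathcal{D}'(\sigma)$; adding the two bounds then yields the desired $O(m + k^2)$ competitive ratio by weak duality and Lemma~\ref{lmm:feasible}.

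The crux is that the upper bound $d_H(p_1,\dots,p_k) \le \sum_{i=1}^k d(p_1,p_i)$ used in Lemma~\ref{lmm:dist_cost} loses a factor of $k$ because it sums over $k-1$ tree paths. For $d_{\max}$, however, the diameter is attained by a single pair: there exist $p^\ast, q^\ast \in \{p_1,\dots,p_k\}$ with $d_{\max}(p_1,\dots,p_k) = d(p^\ast,q^\ast)$. Thus, for a matched group $M_\ell$ produced at the moment some $S = A(u)\cup A(v)$ becomes large enough, I would select the unique path $P$ between $p^\ast$ and $q^\ast$ in the spanning tree of marked edges inside $S$ guaranteed by Lemma~\ref{lmm:mark_spanning_tree}, and bound
\[
 d_{\max}(M_\ell) \;=\; d(p^\ast,q^\ast) \;\le\; \sum_{e\in P} d(e) \;\le\; \sum_{e\in P} \optcost(e) \;=\; k^2 \sum_{e\in P} \sum_{S':\,e\in\delta(S')} y_{S'}(\tau),
\]
using $\gamma = 1$ and tightness of each marked edge.

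Then the same argument as in Claim~1 of the proof of Lemma~\ref{lmm:dist_cost} gives $|P\cap\delta(S')|\le 2$ for every active or formerly-active $S'$, so
\[
 d_{\max}(M_\ell) \;\le\; 2k^2 \sum_{S'} y_{S'}(T).
\]
Because any active or formerly-active $S'$ satisfies $1 \le \sur(S') \le k-1$, we have $\sur(S')(k-\sur(S')) \ge k-1$, hence $\sum_{S'} y_{S'}(T) \le \frac{1}{k-1}\,\mathcal{D}'(\sigma)$, giving $d_{\max}(M_\ell) \le \frac{2k^2}{k-1}\,\mathcal{D}'(\sigma) = O(k)\,\mathcal{D}'(\sigma)$. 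Summing over the $m/k$ matched groups produces a distance-cost bound of $O(m)\cdot \mathcal{D}'(\sigma)$. Combining with the waiting-cost bound $k^2\,\mathcal{D}'(\sigma)$ and applying $\mathcal{D}'(\sigma)\le \mathcal{OPT}(\sigma)$ yields the claimed $O(m + k^2)$ competitive ratio.

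The main obstacle, and essentially the only place where care is needed, is justifying that one may legitimately restrict attention to the single diameter-realizing path $P$: one must verify that $p^\ast$ and $q^\ast$ both lie in the matched group $M_\ell \subseteq S$, and that $P$ stays inside $S$ so that Lemma~\ref{lmm:mark_spanning_tree} and Claim~1 apply to it. Both are immediate because $M_\ell \subseteq \free(S) \subseteq S$ and because $P$ is the unique path in the marked-edge spanning tree of $S$. Everything else in the original proof of Lemma~\ref{lmm:dist_cost}, including the per-group accounting via $y_{S'}(T)$, carries over verbatim, with the saving of a factor $k$ coming solely from replacing a sum of $k-1$ paths by a single path.
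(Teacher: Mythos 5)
Your proposal is correct and follows essentially the same route as the paper's proof: keep the waiting-cost bound $k^2\mathcal{D}'(\sigma)$ from Lemma~\ref{lmm:time_cost}, and save a factor of $k$ in the distance cost by replacing the sum of $k-1$ tree paths with a single path in the marked spanning tree (the paper uses a diametral path of the tree, you use the path joining the diameter-attaining pair of the group; the two are interchangeable here), then apply Claim~\ref{clm:1} and $\sur(S')(k-\sur(S'))\geq k-1$ exactly as in Lemma~\ref{lmm:dist_cost}.
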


\begin{proof}
  In this case, the parameter $\gamma$ is $1$.
  By Lemma~\ref{lmm:time_cost}, we see that the waiting cost can be upper-bounded by $\frac{1}{r}\mathcal{D}'(\sigma) = k^2 \mathcal{D}'(\sigma)$.
  It remains to show that the distance cost is bounded by $m \mathcal{D}'(\sigma)$.
  
  We follow the proof of Lemma~\ref{lmm:dist_cost}, using the same notation.
  Suppose that, at time $\tau$, some pair $e=(u, v)$ becomes tight and that $|\free (S)|\geq k$, where $S=A(u)\cup A(v)$.
  Let $Y\subseteq \free (S)$ be a set of size $k$, which is added to $\mathcal{M}$.
  For simplicity, we denote $Y=\{v_1, \dots, v_k\}$ and $\atime (v_1)\geq \atime (v_i)$ for any $i=1,\dots, k$.
  Let $F$ be the spanning tree formed by the marked edges in $S$.  
  Then, by the definition of the diameter on a line,
  $d(v_i, v_j)$ is upper-bounded by the diameter of $F$.
  Let $P$ be a path of $F$ whose length is equal to the diameter.
  
  Applying a similar argument to Lemma~\ref{lmm:dist_cost} with $\gamma=1$,
  it holds that
  \begin{align*}
    d_{\max}(\pos(v_1), \ldots, \pos(v_k)) &= \max_{i, j \in \{1,2,\dots, k\}} |\pos(v_i) - \pos(v_j)| \\
    &\leq \sum_{e \in P} d(e) \\
    &\leq k^2 \sum_{e \in P} \frac{1}{k^2} \cdot \optcost(e) \\
    &\leq k^2 \sum_{S' \in \mathcal{S}} |P \cap \delta(S')| \cdot y_{S'}(T) \\
    &\leq k \sum_{S' \in \mathcal{S}} 4 \cdot \sur(S') \cdot (k - \sur(S')) \cdot y_{S'}(T) \leq 4k\mathcal{D'}(\sigma).
  \end{align*}
  Since the number of matching is $\frac{m}{k}$, the total distance cost is $\frac{m}{k} \cdot 4k\mathcal{D'}(\sigma)=4m\mathcal{D'}\sigma$.

  Therefore, the total cost of GD-\textit{k} is at most $(4m + k^2) \cdot \mathcal{D'}(\sigma)$, implying that the competitive ratio is $O(m + k^2)$.
  \qed
\end{proof}

\section{Lower bound of GD-$k$ for a diameter on a line}\label{sec:lowerbound}

In this section, we show a lower bound on the competitive ratio for GD-$k$ for the metric $\dD$.
Recall that $\dD(p_1,\dots, p_k)=\max_{i, j\in\{1,\dots, k\}}|p_i-p_j|$ for $p_1,\dots, p_k\in\mathbb{R}$.

We define an instance $\sigma_l =(V, \pos, \atime)$ where $V=\{u_1, u_2, \dots, u_m\}$ as follows.
Suppose that the number $m$ of requests is equal to $m = sk^2$ for some integer $s$.
Let $p_1, \dots, p_k$ be $k$ points in $\mathbb{R}$ such that $d(p_i, p_{i+1}) = 2$ for any $i=1, 2, \dots, k-1$.
 
For $i=1, 2, \dots, sk$ and $j=1,2,\dots, k$, define $\atime (u_{k(i-1)+j}) = t_i$ and $\pos (u_{k(i-1)+j}) =p_j$ for $j=1, 2, \dots, k$, 
where we define $t_1=0$ and $t_i=1 + (2 i - 3)\varepsilon$ for $i\geq 2$.
Thus, at any time $t_i$~($i=1, \dots, sk$), the $k$ requests $u_{k(i-1)+1}, \dots, u_{k(i-1)+k}$ arrive at every point in $p_1, \dots, p_k$, respectively.

Then it holds that $\mathcal{OPT}(\sigma_l)\leq k+k\varepsilon + k^3\varepsilon + mk\varepsilon$, while the output of GD-$k$ has cost at least $m + k + (m - k) \varepsilon$.
%See Appendix~\ref{app:lb} for the detailed proof.

\begin{theorem} \label{thm:lower_bound}
  For a metric space $(\mathbb{R}, \dD)$, 
  there exists an instance $\sigma_l$ of $m$ requests such that GD-$k$ admits a competitive ratio $\Omega(\frac{m}{k})$.
\end{theorem}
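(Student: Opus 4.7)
The plan is to prove two complementary claims for the instance $\sigma_l$: an upper bound $\mathcal{OPT}(\sigma_l) \leq k + k\varepsilon + k^3\varepsilon + mk\varepsilon$ and a lower bound of $m + k + (m-k)\varepsilon$ on the cost of GD-$k$. Dividing and letting $\varepsilon \to 0^+$ then yields a ratio of $\Omega(m/k)$.

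For the upper bound on $\mathcal{OPT}(\sigma_l)$, I would exhibit an explicit offline matching that, at each of the $k$ points $p_j$, partitions the $sk$ requests there into $s$ blocks of $k$ consecutive-in-time requests. Since each such block consists of $k$ co-located requests, its distance cost is $\dD(p_j, \ldots, p_j) = 0$. The total waiting cost is dominated by the first block at each point (spanning from $t_1 = 0$ to $t_k \approx 1$), contributing roughly $k$ across all $k$ points, plus $\Theta(mk\varepsilon)$ from subsequent blocks whose consecutive arrival gaps are $2\varepsilon$. A direct summation of waiting times yields the claimed upper bound.

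For the lower bound on the GD-$k$ cost, denote the $j$-th request of batch $i$ by $u^{(i)}_j := u_{k(i-1)+j}$. The core claim is that GD-$k$ matches all $k$ requests of each batch $i$ together at time $m_i$, where $m_1 = 1$ and $m_i = t_i + \varepsilon$ for $i \geq 2$. I would prove this by induction on $i$. For the base case, each intra-batch consecutive pair $(u^{(1)}_j, u^{(1)}_{j+1})$ has $\optcost = 2$ and becomes tight at time $1$; the simultaneous merges cascade into a single active set of size $k$, which triggers the matching. For the inductive step, after batch $i$ is matched, $y_{\{u^{(i)}_j\}}$ freezes at $(m_i - t_i)/k^2$ and the resulting merged set has empty $\free$, so its $y$ never grows. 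When batch $i + 1$ arrives at $t_{i+1}$, the same-point pair $(u^{(i)}_j, u^{(i+1)}_j)$ has $\optcost = t_{i+1} - t_i$ and benefits from the frozen head start; solving the tightness equation gives tight time exactly $m_{i+1}$. The $k$ same-point pairs tighten simultaneously, cascading into merges that absorb batch $i+1$ into the megaset and trigger matching of its $k$ free requests, at distance cost $\dD(p_1, \ldots, p_k) = 2(k-1)$ and waiting cost $k(m_{i+1} - t_{i+1})$.

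The main obstacle is the case analysis in the inductive step: one must verify that among all possible pairs (intra-batch, same-point cross-batch at varying time gaps, and distinct-point cross-batch), the most recent same-point cross-batch pair is the first to become tight. This requires carefully tracking the frozen singleton $y$-values from all previous batches and comparing candidate tightness times via the explicit formulas for $\optcost$; each competing pair has either $\optcost \geq 2$ (for distinct-point pairs) or a weaker head start (for less recent same-point pairs), yielding a later tight time than $m_{i+1}$. Once the matching pattern is established, summing over batches gives a distance cost of $sk \cdot 2(k-1) = 2m(k-1)/k \geq m$ for $k \geq 2$ and a waiting cost of $k + (sk - 1) k \varepsilon = k + (m - k)\varepsilon$. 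Combining with the upper bound on $\mathcal{OPT}(\sigma_l)$ and letting $\varepsilon \to 0^+$ gives a competitive ratio of at least $(m + k)/k = \Omega(m/k)$, proving the theorem.
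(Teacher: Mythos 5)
Your proposal follows essentially the same route as the paper's own proof: the identical two claims (the explicit offline per-point block matching giving $\mathcal{OPT}(\sigma_l)\leq k+k\varepsilon+k^3\varepsilon+mk\varepsilon$, and the batch-by-batch tightness analysis showing GD-$k$ pays distance $2(k-1)$ per batch for a total cost of at least $m+k+(m-k)\varepsilon$), combined in the same way for small $\varepsilon$. If anything, you are more explicit than the paper about verifying that the same-point cross-batch pairs are the first to become tight in the inductive step, which the paper asserts without a full case analysis.
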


\begin{proof}
We begin with the following claim.

\begin{claim} \label{lmm:WAIT_cost}
  It holds that $\mathcal{OPT}(\sigma_l)\leq k+k\varepsilon + k^3\varepsilon + mk\varepsilon$.
\end{claim}

\begin{proof}
Consider an algorithm that we repeatedly match $k$ requests at the same points at the moment when there are $k$ free requests.
That is, for $j=1,2,\dots, k$, we match $M_{jh}=\{u_{hk^2+j}, u_{hk^2+ k +j}, \dots, u_{hk^2+k(k - 1)+j}\}$ for $h=0,1,\dots, s-1$.
The algorithm returns $\{M_{jh}\mid j\in\{1,2,\dots, k\}, h\in\{0,1,\dots, s-1\}\}$ as a perfect $k$-way matching.

We calculate the cost of choosing $M_{jh}$.
Since the distance cost is clearly zero, we focus on the waiting cost.

First consider when $h=0$.
Then we observe that $\atime (u_{k+j})-\atime (u_j) = 1+\varepsilon$ and $\atime (u_{ki+j})-\atime (u_{k(i-1)+j}) = 2\varepsilon$ for $i=2, \dots, k-1$.
Hence the cost of choosing $M_{j0}$ is equal to
\begin{align*}
\sum_{i=1}^{k-1} \left(\atime (u_{k(k-1)+j})-\atime (u_{k(i-1)+j})\right)
&= \sum_{i=1}^{k-1} i \left(\atime (u_{ki+j})-\atime (u_{k(i-1)+j})\right)\\
&= 1 +\varepsilon + \sum_{i=2}^{k-1}i \cdot 2\varepsilon\\
%&= 1 +\varepsilon+ 2 \varepsilon \left( \frac{k(k - 1)}{2} - 1\right)\\
& = 1 +\varepsilon+ \left( k(k - 1) - 2\right) \varepsilon.
\end{align*}

Next consider the case when $h=1, 2, \dots, s-1$.
Since $\atime (u_{hk^2+ki+j})-\atime (u_{hk^2+k(i-1)+j}) = 2\varepsilon$ for $i=1,2, \dots, k-1$, we have
\[
\sum_{i=1}^{k-1} \left(\atime (u_{hk^2+k(k-1)+j})-\atime (u_{hk^2+k(i-1)+j})\right)
=\sum_{i = 1}^{k - 1} \left( i \cdot 2 \varepsilon \right) = k(k - 1) \varepsilon.
\]

Therefore, since $s-1=m/k^2-1$, the total cost of the perfect $k$-way matching $\{M_{jh}\mid j\in\{1,2,\dots, k\}, h\in\{0,1,\dots, s-1\}\}$ is 
\begin{align*}
k \cdot \left(1 +\varepsilon+ (k(k - 1) - 2)\varepsilon + \left( \frac{m}{k^2} - 1 \right) k(k - 1) \varepsilon\right)
\leq k\left(1 +\varepsilon+ k^2\varepsilon + m\varepsilon\right).
\end{align*}
  \qed
\end{proof}

We next estimate the cost by GD-$k$.
Let $\text{GD-\textit{k}}(\sigma_l)$ be the cost of the output that GD-$k$ returns.
\begin{claim} \label{lmm:gd_k_cost_lower}
  For $k\geq 2$, it holds that
    $\text{GD-\textit{k}}(\sigma_l) \geq m + k + (m - k) \varepsilon$.
\end{claim}

\begin{proof}
Suppose that we run the algorithm GD-$k$ to $\sigma_l$.
Initial active sets are $A(u_j)=\{u_j\}$ for $j=1,2,\dots, k$.
We gradually increase $y_{\{u_j\}}$ in the algorithm.
Then, at time $1$, each pair $(u_i, u_{i+1})$ becomes tight for any $i=1,2,\dots, k-1$.
This implies that we obtain the active set $M_1 = \{u_1, \dots, u_k\}$, which is added to $\mathcal{M}$ at time $1$.
We now have no active sets.

At time $1+\varepsilon$, new requests $u_{k+1}, \dots, u_{2k}$ arrive.
We gradually increase $y_{\{u_{k+j}\}}$ for $j=1,2,\dots, k$.
Then, at time $1+2\varepsilon$, each pair $e=(u_j, u_{k+j})$ becomes tight for any $j=1,2,\dots, k$, since at time $1+2\varepsilon$,
\[
\sum_{S:e\in \delta (S)}y_S = y_{M_1} + y_{\{u_{k+j}\}} = 1 + \varepsilon = \optcost (e).
\]
We merge $M_1$ and all $\{u_{k+j}\}$'s and add $M_2 = \{u_{k+1}, \dots, u_{2k}\}$ to $\mathcal{M}$.
The algorithm proceeds for $i\geq 2$.
Specifically, at time $1 + (2 i - 3)\varepsilon+\varepsilon$, 
each pair $(u_{k(i-2)+j}, u_{k(i-1)+j})$ becomes tight for any $j=1,2,\dots, k$, and $M_i = \{u_{k(i-1)+j}\mid j=1, 2, \dots, k\}$ is added to $\mathcal{M}$.

Since each $M_i$ has the distance cost $2(k-1)$, the total distance cost is $2 \cdot (k - 1) \cdot \frac{m}{k}$.
The waiting cost for choosing $M_1$ is $k$.
Since each request of $M_i$ for $i\geq 2$ waits for $\varepsilon$ time,
the waiting cost of each $M_i$ for $i\geq 2$ is $k\varepsilon$.
Hence the total waiting cost is $k + (m - k) \varepsilon$.
Therefore, since $\frac{k-1}{k}\geq \frac{1}{2}$ as $k\geq 2$, the total cost is 
\[
2\frac{m(k - 1)}{k} + k + (m - k) \varepsilon\geq m+k + (m-k)\varepsilon.
\]
  \qed
\end{proof}

It follows from the above two claims that the competitive ratio is at least
\[
\frac{m+k + (m-k)\varepsilon}{k(1+\varepsilon + k^2\varepsilon + m\varepsilon)}
\geq
\frac{m+k}{4k}
\]
if $\varepsilon \leq 1/\max\{k^2, m\}$.
Thus the competitive ratio is $\Omega(\frac{m}{k})$.
\qed
\end{proof}

% \input{chapter/04_offline_approx}

%\input{chapter/10_conclusion}

% ---- Bibliography ----
%
% BibTeX users should specify bibliography style 'splncs04'.
% References will then be sorted and formatted in the correct style.
%
\bibliographystyle{splncs04}
\bibliography{citation_details}

% \clearpage
% \input{chapter/Appendix}

%
% \begin{thebibliography}{8}
% \bibitem{ref_article1}
% Author, F.: Article title. Journal \textbf{2}(5), 99--110 (2016)

% \bibitem{ref_lncs1}
% Author, F., Author, S.: Title of a proceedings paper. In: Editor,
% F., Editor, S. (eds.) CONFERENCE 2016, LNCS, vol. 9999, pp. 1--13.
% Springer, Heidelberg (2016). \doi{10.10007/1234567890}

% \bibitem{ref_book1}
% Author, F., Author, S., Author, T.: Book title. 2nd edn. Publisher,
% Location (1999)

% \bibitem{ref_proc1}
% Author, A.-B.: Contribution title. In: 9th International Proceedings
% on Proceedings, pp. 1--2. Publisher, Location (2010)

% \bibitem{ref_url1}
% LNCS Homepage, \url{http://www.springer.com/lncs}. Last accessed 4
% Oct 2017
% \end{thebibliography}
\end{document}